\definecolor{note_fontcolor}{rgb}{0.800781, 0.800781, 0.800781}
\numberwithin{equation}{section}
\numberwithin{figure}{section}
\theoremstyle{definition}
 \newtheorem*{defn*}{\protect\definitionname}
\theoremstyle{plain}
 \newtheorem{thm}{\protect\theoremname}
 \newtheorem{prop}{\protect\propositionname}
 \newtheorem{lem}{\protect\lemmaname}
\newcommand{\pref}{\prettyref}
\newcommand{\Z}{\mathbf{Z}}
\renewcommand{\H}{\mathfrak{H}}
\newcommand{\R}{\mathcal{R}}
\renewcommand{\L}{\mathcal{L}}
\newcommand{\M}{\mathcal{M}}
\newcommand{\F}{\mathbf{F}}
\newcommand{\Rb}{\mathbf{R}}
\newcommand{\mc}[1]{\begin{pmatrix*}[c]#1\end{pmatrix*}}
  \providecommand{\definitionname}{Definition}
  \providecommand{\lemmaname}{Lemma}
  \providecommand{\propositionname}{Proposition}
\providecommand{\theoremname}{Theorem}
\begin{document}
\global\long\def\pd#1#2{\frac{\partial#1}{\partial#2}}
\global\long\def\weight#1{\operatorname{weight}(#1)}

\title[]{Formal recursion operators of integrable nonevolutionary equations and Lagrangian systems}

\author[ACQ]{Agustín Caparrós Quintero}
\author[RHH]{Rafael Hernández Heredero}
\address{ETSI Sistemas de Telecomunicaci\'on\\
Universidad Politécnica de Madrid}
\email[ACQ]{agustin.caparros.quintero@alumnos.upm.es}
\email[RHH]{rafahh@etsist.upm.es}

\begin{abstract} We derive the general structure of the space of formal recursion operators of nonevolutionary equations~$q_{tt}=f(q,q_{x},q_t,q_{xx},q_{xt},q_{xxx},q_{xxxx})$. This allows us to classify integrable Lagrangian systems with a higher order Lagrangian of the form~$\mathscr{L}=\frac12 L_2(q_{xx}, q_x, q)\,q_t^2 + L_1(q_{xx}, q_x, q)\, q_{t} + L_0(q_{xx}, q_x, q)$.  The key technique relays on exploiting a homogeneity of the determining equations of formal recursion operators. This technique allows us to extend the main results to more general equations~$q_{tt}=f(q,q_{x},\ldots,q_{n};q_{t},q_{xt},\ldots,q_{mt})$.
\end{abstract}

\maketitle

\section{Introduction and motivation}

In this paper we research integrability properties of equations of the form
\begin{equation}\label{eq:eq4}
q_{tt}=f(q,q_{x},q_t,q_{xx},q_{xt},q_{xxx},q_{xxxx}),\qquad\pd{f}{q_{xxxx}}\neq0
\end{equation}
with particular interest in those arising as Euler-Lagrange equations of Lagrangian densities
\begin{equation}\label{eq:lagrangian}
\mathscr{L}=\frac12 L_2(q_{xx}, q_x, q)\,q_t^2 + L_1(q_{xx}, q_x, q)\, q_{t} + L_0(q_{xx}, q_x, q).
\end{equation}
This family of Lagrangian systems includes interesting examples of integrable models with physical relevance like
\begin{align}
\mathscr{L}&=\displaystyle\frac12\left(q_t+q_{xx}-q_x^2\right)^2, &&\text{NLS}\label{eq:eqNLS}
\\[3mm]
\mathscr{L}&=\displaystyle\frac12\left(q_t+q_{xx}\right)^2+\frac12q_x^3,&&\text{Boussinesq}\label{eq:eqBouss}
\\[3mm]
\mathscr{L}&=\displaystyle\frac{\left(q_t+q_{xx}-\frac12R'(q)\right)^2}{4\left(q_x^2-R(q)\right)}+\frac1{12}R''(q),\ R^{\rm v}=0.
&&{\text{Landau-Lifshitz}}\label{eq:eqLL}
\end{align}
which are related respectively to the nonlinear Schr\"odinger equation, the (potential) Boussinesq equation and a continuous Landau-Lifshitz model for spin chains~\cite[p.~1454]{AMS},\cite{ASY,QC1}.

A very powerful theory to study the integrability of differential equations is the~\emph{symmetry approach to integrability}~\cite{MSS,MSY}, that started dealing with evolutionary equations $u_t=f(x,t;q,q_x,\ldots,q_n)$. Under its umbrella some studies have obtained  results on more general nonevolutionary equations including~\pref{eq:eq4}. In~\cite{NovWang,MNW}, very general results were obtained using a perturbative symmetry approach in symbolic representation. The price to pay for such generality is that this representation, in principle, requires the equation to be polynomial and homogeneous. Another extension of the symmetry approach applicable to general non-evolutionary equations, not necessarily polynomial, was presented in letter~\cite{HSS}. The main aim of that letter was to study third order systems of the form~$u_{tt}=u_{xxx}+F(q,q_x,q_t,q_{xx},q_{xt})$.

In the present work we expand the ideas in~\cite{HSS} to treat equations of second order in~$t$ and any order in~$x$, including the fourth order ones~\pref{eq:eq4}. The fourth order case shows a more complicated symmetry structure and requires a further elaboration of the theory. We have developed a simple theoretical framework based on the homogeneity of the determining equations for a recursion operator. This framework allows us to characterise the structure of the space of recursion operators and the space of symmetries of a PDE
\begin{equation}\label{eq:eqi}
q_{tt}=f(q,q_{x},\ldots,q_{n};q_{t},q_{xt},\ldots,q_{mt})
\end{equation}
of in fact, any differential order in~$x$. For the fourth order case~\pref{eq:eq4}, we give explicit integrability conditions on the rhs~$f$ and we use them to give a classification of integrable Lagrangian systems~\pref{eq:lagrangian}.

In Section~2 we give an introduction to the theory. The main object related to integrability in our approach is a~\emph{formal recursion operator}. We review the derivation given in~\cite{HSS} of the determining equations of formal recursion operators for general non-evolutionary equations~\pref{eq:eqi}.

In Section~3 we display the coarse, general structure of the space of formal recursion operators for a given equation~\pref{eq:eqi}, discussing their main properties regarded as pseudodifferential operators. We show how this space is a field and discuss the equivalence between two different representations, either using scalar recursion operators or using matrix recursion operators. We give the definition of integrable differential equation under this approach.

In Section~4 we describe the finer structure of the space of formal recursion operators in the case of fourth order equations~\pref{eq:eq4}. Exploiting the homogeneity of the determining equations and using a symbolic notation allows us to proof that in this case this space, regarded as an algebra, is generated by two independent elements.  Furthermore, we explain how explicit integrability conditions appear in the form of conservation laws  for equations of type~\prettyref{eq:eq4}.

In Section~5 we give, as an application of the previous theory, a classification  of integrable Lagrangian systems~\eqref{eq:lagrangian}. For each retrieved case, we find in~\pref{sec:recops} non-formal recursion operators and the structure of the space of higher symmetries, thus proving their integrability.

In the conclusions Section~7 we discuss how the homogeneity scheme developed to study equations~\pref{eq:eq4} can be extended to general equations~\pref{eq:eq}.

\section{Integrability of equations~$q_{tt}=f(q,q_{t},q_{x},q_{xt},\ldots)$}
\label{sec:intint}

Lie symmetries and, especially, higher or generalised Lie symmetries~\cite{Olver} have been prolifically used to characterise integrability of partial differential equations. The symmetry approach to integrability is a well established theory that provides a rigorous, computable characterisation of integrability for  evolution equations
\[
q_t=f(x,t;q,q_x,\ldots,q_n)
\]
on two independent variables~$x$, $t$ and one dependent variable~$q=q(x,t)$, with the notations
\[
q_t\coloneqq\frac{\partial q}{\partial t},\quad q_x\coloneqq\frac{\partial q}{\partial x},\quad q_{i}\coloneqq\frac{\partial^{i}q}{\partial x^{i}},\quad i\in\mathbf{N}\quad\text{and}\quad q_{0}\coloneqq q.
\]
A generalised symmetry is an evolution equation in an additional parameter~$\tau$
\[q_\tau=g(x,t;q,q_x,\ldots,q_m)
\]
that is compatible with the given equation. We will refer to such a symmetry using the differential function~$g$, called~\emph{characteristic} of the symmetry.  According to the symmetry approach, an equation is integrable if it admits an infinite number of symmetries of arbitrarily high differential order. Usually, a sufficient condition for the integrability of a PDE is to admit a~\emph{recursion operator}~$\R$. A recursion operator generates a whole hierarchy of symmetries starting from a seed symmetry~$g_0$, i.e.~all iterations~$\R^i(g_0)$ are symmetries.
In order to study integrability, working with recursion operators is much easier than working directly with symmetries.

The equations studied in this paper are differential equations  of the form~\pref{eq:eqi}
\begin{equation}\label{eq:eq}
q_{tt}=f(q,q_{x},\ldots,q_{n};q_{t},q_{xt},\ldots,q_{mt})
\end{equation}
where
\[
 q_{tt}\coloneqq\frac{\partial^{2}q}{\partial t^2},\quad q_{jt}\coloneqq\frac{\partial^{j+1}q}{\partial x^{j}\partial t},\quad j\in\mathbf{N},\quad\text{and}\quad q_{0t}\coloneqq q_{t}.
\]

We consider, within the framework of differential algebra, a set~$\H$ of differential functions~$h(q,q_{x},\ldots,q_{k};q_{t},q_{xt},\ldots,q_{lt})$ with two derivations~$D\coloneqq D_{x}$ and~$D_{t}$. If~$\partial h/\partial q_{k}\neq0$ and~$\partial h/\partial q_{lt}\neq0$ the pair~$(k,l)$ will denote the differential order of~$h$. 
The derivations~$D$ and~$D_{t}$ act over differential functions~$h\in\H$ as total derivatives
\begin{gather*}
Dh=\sum_{i=0}^{k}\frac{\partial h}{\partial q_{i}}q_{i+1}+\sum_{j=0}^{l}\frac{\partial h}{\partial q_{jt}}q_{j+1t},\\
D_{t}h=\sum_{i=0}^{k}\frac{\partial h}{\partial q_{i}}q_{i+1t}+\sum_{j=0}^{l}\frac{\partial h}{\partial q_{jt}}q_{j+1tt}\doteq\sum_{i=0}^{k}\frac{\partial h}{\partial q_{i}}q_{i+1t}+\sum_{j=0}^{s}\frac{\partial h}{\partial q_{jt}}D^{j+1}f
\end{gather*}
where the symbol~$\doteq$ denotes an equality that holds only in the solution space of~\prettyref{eq:eq}, i.e.~$q_{tt}$ and its $x$-derivatives~$q_{jtt}$ are to be substituted by~$f$ and its corresponding total derivatives.

Generalised Lie symmetries of~\prettyref{eq:eq} are of the form
\begin{equation}\label{eq:sym}
q_{\tau}=g(q,q_{x},\ldots,q_{r};q_{t},q_{xt},\ldots,q_{st})
\end{equation}
with~$g\in\H$. A generalised symmetry~$g$ provides an additional derivation~$D_{\tau}$ in the differential algebra that commutes with~$D$ and acts on a differential function~$h$ as
\[
D_{\tau}h=\sum_{i=0}^{k}\frac{\partial h}{\partial q_{i}}D^{i}g+\sum_{j=0}^{l}\frac{\partial h}{\partial q_{jt}}D^{j}D_{t}g.
\]
We will use indistinctly the notations~$D_{\tau}h=h_{\tau}=D_{\tau}(h)$.
The symmetry condition, i.e.~the compatibility of~\pref{eq:eq}--\pref{eq:sym} is then
\begin{equation}\label{eq:symc}
D_{\tau}f\doteq D_{t}^{2}g
\end{equation}
and it can be expressed in terms of the linearisation, or variational equation,
of~\prettyref{eq:eq}. We write this variational equation as
\begin{equation}\label{eq:ldeq}
D_{t}^{2}\phi=\left(U+VD_{t}\right)\phi
\end{equation}
where~$\phi=\phi(x,t)$ is a new dependent variable, the variation, and
\[
\mathcal{F}\coloneqq U+VD_{t}
\]
is the linearisation operator of the differential function~$f$, being
$U$ and~$V$ differential operators on~$D$ only:
\begin{equation}\label{eq:symbUV}
U\coloneqq u_{n}D^{n}+u_{n-1}D^{n-1}+\cdots+u_{0},\quad V\coloneqq v_{m}D^{m}+v_{m-1}D^{m-1}+\cdots+v_{0}
\end{equation}
with
\[
u_{i}\coloneqq \pd{^{i}f}{q_{i}},\quad v_{j}\coloneqq \pd f{q_{jt}}.
\]
The symmetry condition~\prettyref{eq:symc} can now be written as
\begin{equation}\label{eq:symcond1}
D_{t}^{2}g\doteq\left(U+VD_{t}\right)g
\end{equation}
providing a variational interpretation to generalised symmetries~\pref{eq:sym}.
 
We are interested in equations that are~\emph{integrable}
from the symmetry approach point of view, i.e.~equations that admit an infinite number of symmetries~\prettyref{eq:sym} of arbitrarily high differential order~$(r,s)$. To treat this problem, it is convenient to use an alternative way to write the symmetry condition. It consists in performing a second linearisation on~\prettyref{eq:symcond1} or, equivalently, considering the compatibility of the linearised versions
of~\prettyref{eq:eq} and~\prettyref{eq:sym}. The linearised version of~\prettyref{eq:eq} is~\prettyref{eq:ldeq}, and that of~\prettyref{eq:sym}
is
\begin{equation}
\phi_{\tau}=\left(L+MD_{t}\right)\phi,\label{eq:ldsym}
\end{equation}
where~$\mathcal{G}\coloneqq L+MD_{t}$ is the linearisation operator of the differential function~$g$, i.e.~
\[
L=l_{r}D^{r}+l_{r-1}D^{r-1}+\cdots+l_{0},\quad M=m_{s}D^{s}+m_{s-1}D^{s-1}+\cdots+m_{0}
\]
are differential operators with
\[
l_{i}\coloneqq\pd g{q_{i}},\quad m_{j}\coloneqq\pd g{q_{jt}}.
\]
Now the compatibility of equations~\prettyref{eq:ldeq} and~\prettyref{eq:ldsym} leads to symmetry conditions that can be written as equations over differential operators:
\begin{equation}\label{eq:symcond}
\begin{gathered}L_{tt}-VL_{t}+[L,U]+2M_{t}U+MU_{t}+[M,V]U\doteq U_{\tau},\\
M_{tt}-VM_{t}+2M_{t}V+[M,U]+2L_{t}+[L,V]+MV_{t}+[M,V]V\doteq V_{\tau}.
\end{gathered}
\end{equation}
In this formula we have defined the extension of a derivation~$D_{\tau}$ (or~$D_{t}$,~$D_{x}$) to a differential operator~$O=o_{k}D^{k}+o_{k-1}D^{k-1}+\cdots+o_{0}$ as
\[
O_{\tau}=D_{\tau}(o_{k})D^{k}+D_{\tau}(o_{k-1})D^{k-1}+\cdots+D_{\tau}(o_{0}).
\]
Equations~\prettyref{eq:symcond} have to be interpreted as the system of PDE's obtained equating every coefficient of~$D^{j}$ in the lhs's to the corresponding coefficient in the rhs's. Equations~\prettyref{eq:symcond} can be used to find symmetries of a fixed differential order~$(r,s)$. In this case, the operators~$U$ and~$V$ are determined by the equation, so they are data in~\prettyref{eq:symcond}, while the operators~$L$ and~$M$ are the unknowns. This procedure is completely equivalent to directly finding symmetries from~\pref{eq:symcond1}. In the problem of finding integrable equations all four operators~$U$,~$V$ (associated to the equation~\prettyref{eq:eq}) and~$L$,~$M$ (associated to the symmetry~\prettyref{eq:sym}) are unknowns in~\prettyref{eq:symcond}. But the fact that we require the existence of symmetries of an arbitrarily high order, i.e.~that the operators~$L$ and~$M$ have to be of an arbitrarily high degree in~$D$, allows to simplify the problem. For an arbitrary pair of differential operators~$L$ and~$M$ of degrees~$r$ and~$s$, the left hand sides of~\prettyref{eq:symcond} have generically degrees higher than~$r$ and~$s$. But the right hand sides are
\[
U_{\tau}=D_{\tau}(u_{n})D^{n}+\cdots+D_{\tau}(u_0),\quad V_{\tau}=D_{\tau}(v_{m})D^{m}+\cdots+D_{\tau}(v_0)
\]
and have generically a fixed degree of~$n$ and~$m$ respectively. If there are symmetries of higher order~$r\gg n$ and~$s\gg m$, the number of conditions to be met in the left hand sides increases when increasing~$r$ and~$s$, tending to be infinite in the arbitrarily high order case. The symmetry approach then proposes~\cite{HSS} to consider the operator equations
\begin{gather}
L_{tt}-VL_{t}+[L,U]+2M_{t}U+MU_{t}+[M,V]U\doteq0,\label{eq:fro1}\\
M_{tt}-VM_{t}+[M,U]+2M_{t}V+[L,V]+2L_{t}+MV_{t}+[M,V]V\doteq0.\label{eq:fro2}
\end{gather}
In order to make this idea to work the operators~$L$ and~$M$ that solve~\pref{eq:fro1}--\pref{eq:fro2} must be taken in the class of~\emph{pseudodifferential} operators instead of differential operators. Equations~\pref{eq:eq} for which there exist nontrivial solutions to~\pref{eq:fro1}--\pref{eq:fro2} are deemed as integrable under the symmetry approach, because it is a necessary condition for having symmetries of arbitrarily high order.

\section{Formal recursion operators}

The algebraic structure of the space of solutions~$\mathcal{R}\coloneqq L+MD_{t}$
of~\prettyref{eq:fro1}--\prettyref{eq:fro2} is much richer than that of~\prettyref{eq:symcond}.
The product of two solutions,
defined as
\begin{multline*}
\left(L_{1}+M_{1}D_{t}\right)\cdot\left(L_{2}+M_{2}D_{t}\right)=
 L_{1}\cdot L_{2}+M_{1}\cdot\left(L_{2}\right)_{t}+M_{1}\cdot M_{2}\cdot U
\\
 {}+\left[L_{1}\cdot M_{2}+M_{1}\cdot L_{2}+M_{1}\cdot\left(M_{2}\right)_{t}+M_{1}\cdot M_{2}\cdot V\right]D_{t}
\end{multline*}
is also a solution, a reflection of the fact that the set of generalised symmetries is a Lie algebra. The identity operator~$\mathcal{I}=1+0D_{t}$ is always a solution. Once we have a product and an identity, we can have~\emph{inverses}. The inverse of a differential operator exists in the field of~\emph{pseudodifferential} operators. We will consider operators
\begin{equation}\label{eq:scfro}
\R=L+MD_{t}
\end{equation}
where~$L$ and~$M$ are  pseudodifferential operators of degrees $r$ and~$s$
\begin{equation}\label{eq:psLM}
\begin{gathered}
L=l_{r}D^{r}+l_{r-1}D^{r-1}+\cdots+l_{0}+l_{-1}D^{-1}+\cdots,\\
M=m_{s}D^{s}+m_{s-1}D^{s-1}+\cdots+m_{0}+m_{-1}D^{-1}+\cdots
\end{gathered}
\end{equation}
being~$l_{i}$, $m_{j}$ differential functions. The product of pseudodifferential operators follows the standard integration by parts rule
\begin{equation}\label{eq:dm1}
D^{-1}\cdot\mathcal{O}=\mathcal{O}D^{-1}-D^{-1}\cdot D(\mathcal{O})D^{-1}=\sum_{i=0}^{\infty}(-1)^iD^i(\mathcal{O})D^{-1-i}.
\end{equation}
In the class of pseudodifferential operators every nonzero operator~$\mathcal{R}$ has a unique inverse~$\mathcal{R}^{-1}$ and, if~$\mathcal{R}$ is a solution of~\prettyref{eq:fro1}--\prettyref{eq:fro2}, then~$\mathcal{R}^{-1}$ is also a solution. These solutions~$\mathcal{R}$ are called \emph{formal recursion operators} (briefly fro's) of the given differential equation~\prettyref{eq:eq}. This name is justified because they satisfy the equation
\begin{equation}\label{eq:gro}
\mathcal{H}\left(L+MD_{t}\right)\doteq\left(\overline{L}+MD_{t}\right)\mathcal{H}
\end{equation}
where~$\mathcal{H}\coloneqq D_{t}^{2}-U-VD_{t}$ is the linearisation operator of~\prettyref{eq:eq} and~$\overline{L}\coloneqq L+2M_{t}+[M,V]$. This is the equation that a recursion operator satisfies~\cite{Olver}. The formal character of~$\mathcal{R}$ is due to the fact that it is generally written as a pseudodifferential infinite series (as it can be done with~\pref{eq:dm1}). An expression for~$\mathcal{R}$ in closed, finite form is what is usually called a (proper) recursion operator.

The proofs of the assertions in the previous paragraph are easily done using an alternative, \emph{matrix} formulation of the problem. It is based on representing the differential equation~\pref{eq:eq} as a first order system
\begin{equation}\label{eq:eqsy}
\begin{aligned}
q_t&=p\\
p_t&=f(q,q_x,\ldots,q_r,p,p_x,\ldots,p_s)
\end{aligned}
\end{equation}
and performing the double linearisation process described in the previous section. The variational equations of~\pref{eq:eqsy} can be written in a matrix form as
\begin{gather*}
\mc{\phi\\\psi}_t\!=\F\mc{\phi\\\psi}\text{\quad with\quad}\F\coloneqq\mc{0&1\\U&V}
\end{gather*}
and the variational equations of~\pref{eq:sym} as
\begin{gather}
\label{eq:Rm}\mc{\phi\\\psi}_{\tau}\!=\Rb\mc{\phi\\\psi}\text{\quad with\quad}\Rb\coloneqq\mc{L&M\\\widehat{L}&\widehat{M}}
\\
\notag
\text{and\qquad $\widehat{L}\coloneqq L_t+MU$,\qquad $\widehat{M}\coloneqq M_t+MV+L$.}
\end{gather}
The commutation of these two equations gives rise to the usual expression for a recursion operator
\begin{equation}\label{eq:mfro}
\mathbf{R}_{t}=[\mathbf{F},\mathbf{R}]
\end{equation}
but in matrix form. 

Using~\pref{eq:mfro} it is straightforward to prove the following basic features of the set~$\mathfrak{F}$ of formal matrix recursion operators of~\prettyref{eq:eq}, that is, the space of solutions~$\Rb$ of~\prettyref{eq:mfro} given a fixed equation~\prettyref{eq:eq}, i.e.~a fixed~$\mathbf{F}$.
\begin{prop} Let~$\mathfrak{F}$ the set of formal matrix recursion operators of equation~\pref{eq:eq}, i.e.~the set of solutions~$\Rb$ to equation~\pref{eq:mfro}. Then
\begin{enumerate}
\item The set~$\mathfrak{F}$ is a linear space over the constants.
\item The product of two fro's~$\mathbf{R}_{1},\mathbf{R}_{2}\in\mathfrak{F}$
is a fro~$\mathbf{R}_{1}\cdot\mathbf{R}_{2}\in\mathfrak{F}$.
\item For any fro~$\mathbf{R}\in\mathfrak{F}$,~$\mathbf{R}\neq0$ there exists an inverse fro~$\mathbf{R}^{-1}\in\mathfrak{F}$.
\end{enumerate}
Thus~$\mathfrak{F}$ is a field and a graded algebra. 
\end{prop}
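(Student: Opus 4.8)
The plan is to work throughout with the single matrix equation $\mathbf{R}_t=[\mathbf{F},\mathbf{R}]$ of~\pref{eq:mfro}, because all three assertions follow from one structural fact: the map $\mathrm{ad}_{\mathbf{F}}\coloneqq[\mathbf{F},\,\cdot\,]$ is a derivation of the algebra of $2\times2$ matrix pseudodifferential operators. The preliminary lemma I would record first is that the coefficientwise extension of $D_t$ to operators is itself a derivation for operator composition, $(O_1O_2)_t=(O_1)_tO_2+O_1(O_2)_t$. This holds exactly because $D_t$ and $D=D_x$ commute as total derivatives, so differentiating the product rule~\pref{eq:dm1} coefficient by coefficient reproduces the Leibniz rule at operator level. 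Given this, items (1) and (2) become purely formal.

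For (1), if $\mathbf{R}_1,\mathbf{R}_2$ solve~\pref{eq:mfro} and $c_1,c_2$ are constants, so $D_t(c_i)=0$, then $(c_1\mathbf{R}_1+c_2\mathbf{R}_2)_t=c_1[\mathbf{F},\mathbf{R}_1]+c_2[\mathbf{F},\mathbf{R}_2]=[\mathbf{F},c_1\mathbf{R}_1+c_2\mathbf{R}_2]$ by linearity of $D_t$ and of the bracket. For (2), the derivation lemma gives
\[
(\mathbf{R}_1\mathbf{R}_2)_t=[\mathbf{F},\mathbf{R}_1]\,\mathbf{R}_2+\mathbf{R}_1\,[\mathbf{F},\mathbf{R}_2]=[\mathbf{F},\mathbf{R}_1\mathbf{R}_2],
\]
the inner terms $-\mathbf{R}_1\mathbf{F}\mathbf{R}_2$ and $+\mathbf{R}_1\mathbf{F}\mathbf{R}_2$ cancelling; this is just the Leibniz rule for $\mathrm{ad}_{\mathbf{F}}$. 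The identity $\mathbf{I}$ is a fro since $\mathbf{I}_t=0=[\mathbf{F},\mathbf{I}]$.

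Item (3) is where the substance lies, and I would split it in two. The easy half is that \emph{if} an inverse exists it is automatically a fro: differentiating $\mathbf{R}\mathbf{R}^{-1}=\mathbf{I}$ and using $\mathbf{I}_t=0$ gives $(\mathbf{R}^{-1})_t=-\mathbf{R}^{-1}(\mathbf{R})_t\mathbf{R}^{-1}=-\mathbf{R}^{-1}[\mathbf{F},\mathbf{R}]\mathbf{R}^{-1}=[\mathbf{F},\mathbf{R}^{-1}]$. The hard part, and the main obstacle, is the \emph{existence} of $\mathbf{R}^{-1}$ as a pseudodifferential operator for every nonzero $\mathbf{R}\in\mathfrak{F}$. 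Here I would use that the ring of scalar pseudodifferential operators in $D$ is a skew field: any nonzero series with invertible leading coefficient has a formal inverse built recursively order by order via~\pref{eq:dm1}. Passing to the equivalent scalar representation $\mathcal{R}=L+MD_t$ of~\pref{eq:scfro}, which lives in the rank-two algebra generated by $D_t$ over this skew field modulo $D_t^2=U+VD_t$, I would invert $L+MD_t$ by the noncommutative analogue of rationalising a quadratic surd: multiply by a suitable $L'+M'D_t$ chosen to annihilate the $D_t$-component of the product, landing in the scalar skew field. The crux — the point needing genuine computation rather than formal manipulation — is to show the resulting scalar part is nonzero, equivalently that the leading $2\times2$ symbol of a nonzero solution of~\pref{eq:fro1}--\pref{eq:fro2} is invertible, so that $\mathbf{R}$ is not a zero divisor.

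Items (1)--(3) together exhibit $\mathfrak{F}$ as an associative division algebra over the field of constants. The grading is by order in $D$: fro's of definite leading order multiply by adding orders, and the product rule for $L+MD_t$ recalled above is homogeneous for this grading, so $\mathfrak{F}$ is a graded algebra. The commutativity that upgrades ``division ring'' to ``field'' I would deduce from the rigidity of leading terms — the bracket $[\mathbf{R}_1,\mathbf{R}_2]$ is again a fro by~(1),~(2) and the Jacobi identity, and leading-order considerations force it to vanish — this being exactly the kind of finer leading-symbol analysis carried out for~\pref{eq:eq4} in Section~4.
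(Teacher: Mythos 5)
The paper offers no written proof of this proposition---it only remarks that the claims are ``straightforward to prove'' using~\pref{eq:mfro}---and your treatment of items (1) and (2), together with the observation that an inverse, when it exists, automatically satisfies $(\mathbf{R}^{-1})_t=[\mathbf{F},\mathbf{R}^{-1}]$, is exactly the intended argument: everything rests on the coefficientwise $D_t$ being a derivation of operator composition, so that $\mathrm{ad}_{\mathbf{F}}$ is a derivation and linear combinations, products and (existing) inverses of solutions of~\pref{eq:mfro} are again solutions. Up to that point your proposal is complete, correct, and more detailed than the source.

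The genuine gap is the one you yourself flag: the \emph{existence} of $\mathbf{R}^{-1}$ for every nonzero $\mathbf{R}\in\mathfrak{F}$. You reduce it to showing that a nonzero solution of~\pref{eq:fro1}--\pref{eq:fro2} is not a zero divisor, equivalently that its leading matrix symbol in the representation~\pref{eq:Rm} is invertible, but you do not prove this---and it cannot be proved in the stated generality. Indeed, \pref{lem:one}(2) gives $\mathcal{M}_{-2}^{2}=\mathcal{I}$ with $\mathcal{M}_{-2}\neq\pm\mathcal{I}$; since $\mathfrak{F}$ is a linear space containing $\mathcal{I}$ and $\mathcal{M}_{-2}$, the nonzero elements $\mathcal{I}\pm\mathcal{M}_{-2}$ belong to $\mathfrak{F}$ and satisfy $(\mathcal{I}+\mathcal{M}_{-2})(\mathcal{I}-\mathcal{M}_{-2})=0$. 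For these elements the scalar leading coefficient equals $1$, yet the associated matrix~\pref{eq:Rm} degenerates at leading order (the block $MU$ cancels the product of the diagonal leading terms in the would-be determinant), which is precisely the failure mode your rationalisation argument must exclude and cannot. So item (3), and with it the word ``field,'' holds only under an additional hypothesis (e.g.\ invertibility of the leading matrix symbol, or restriction to the homogeneous generators $\mathcal{L}_{1}^{i}$, $\mathcal{M}_{0}\mathcal{L}_{1}^{j}$ of \pref{thm:one}); your strategy correctly isolates the obstruction but the step is not closable as stated. Your deferral of commutativity to leading-symbol analysis is, by contrast, sound: Section~4 shows $\mathcal{L}_{1}$ and $\mathcal{M}_{-2}$ commute and generate the whole algebra.
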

The addition, product and inverses of matrix fro's correspond univocally to the same operations in scalar fro's, and equation~\pref{eq:mfro} is completely equivalent to equations~\pref{eq:fro1}--\pref{eq:fro2}. Thus there is an isomorphism between scalar and matrix fro's. Every formal (scalar) recursion operator~$\R=L+MD_t$ has a unique associated formal~\emph{matrix} recursion operator~$\Rb$ that is a solution of~\pref{eq:mfro} and has the form given in~\pref{eq:Rm}. We will use from now on only scalar recursion operators~$\R$, because they present some computational advantages.

From the discussion in~\pref{sec:intint} it is clear that there is a relation between higher order generalised symmetries and formal recursion operators. If a generalised symmetry~$g$ has a sufficiently high order, its linearisation~$\mathcal{G}=L+MD_{t}$
anihilates many terms in the equations of a fro~\prettyref{eq:fro1}--\prettyref{eq:fro2}, and it gives the leading behaviour of a fro of the same order. Thus, although a generalised symmetry does not generate a full formal recursion operator because the rhs's of~\prettyref{eq:symcond} are not zero, it partially solves equations~\prettyref{eq:fro1}--\prettyref{eq:fro2}.
This fact is formalised by the following definition, where~$\deg(\mathcal{O})$ denotes the degree of a pseudodifferential operator~$\mathcal{O}$, i.e.~its maximum power in~$D$.
\begin{defn*}
The pseudodifferential operator~$\mathcal{R}=L+MD_{t}$ with~$\deg(L)=r$, $\deg(M)=s$ is a \emph{formal symmetry }(or approximate symmetry) of rank~$(k,l)$, of equation~\prettyref{eq:eq}, if the left hand sides of~\prettyref{eq:fro1}--\prettyref{eq:fro2}
are of degree  at most~$r+n-k$ and~$s+m-l$ respectively.
\end{defn*}
The next result follows.
\begin{prop}
The linearisation~$\mathcal{G}=L+MD_{t}$ of a symmetry~\prettyref{eq:sym}
of equation~\prettyref{eq:eq} is a formal symmetry of rank~$(r,s)$.
\end{prop}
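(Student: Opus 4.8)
The plan is to unwind the definitions and show that the two defining properties of a formal symmetry of rank~$(r,s)$ hold automatically for the linearisation~$\mathcal{G}=L+MD_t$ of an actual symmetry. Here $L$ and $M$ are \emph{genuine} differential operators of degrees exactly $r$ and $s$, so $\deg(L)=r$ and $\deg(M)=s$ as required. The content to verify is that the left hand sides of~\pref{eq:fro1} and~\pref{eq:fro2} have degree at most $r+n-k$ and $s+m-l$ respectively, where the rank is $(k,l)=(r,s)$; that is, I must show these left hand sides have degree at most $n$ and $m$ respectively.

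First I would recall the exact relation between the symmetry condition and the fro equations. The symmetry condition~\pref{eq:symc} for a genuine symmetry~$g$, written in operator form, is precisely the system~\pref{eq:symcond}, whose right hand sides are $U_\tau$ and $V_\tau$. As the excerpt already notes, $U_\tau=D_\tau(u_n)D^n+\cdots+D_\tau(u_0)$ and $V_\tau=D_\tau(v_m)D^m+\cdots+D_\tau(v_0)$, so $\deg(U_\tau)\le n$ and $\deg(V_\tau)\le m$ generically. The fro equations~\pref{eq:fro1}--\pref{eq:fro2} are obtained from~\pref{eq:symcond} by dropping exactly these right hand sides. Therefore, for the linearisation of a genuine symmetry, the left hand sides of~\pref{eq:fro1}--\pref{eq:fro2} are not zero but equal $U_\tau$ and $V_\tau$ respectively:
\begin{gather*}
L_{tt}-VL_{t}+[L,U]+2M_{t}U+MU_{t}+[M,V]U = U_\tau,\\
M_{tt}-VM_{t}+[M,U]+2M_{t}V+[L,V]+2L_{t}+MV_{t}+[M,V]V = V_\tau.
\end{gather*}

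The key step is then the degree count: since these left hand sides equal $U_\tau$ and $V_\tau$, their degrees are bounded by $n$ and $m$. Setting $(k,l)=(r,s)$ in the definition, the required bounds $r+n-k=n$ and $s+m-l=m$ match exactly, so $\mathcal{G}$ is a formal symmetry of rank~$(r,s)$. The main obstacle, and the point deserving care, is justifying that one may pass from the $\doteq$ (on-shell) identity~\pref{eq:symcond} to a genuine statement about the degrees of the operators appearing on the left: one must be sure that the substitution of $q_{tt}$ and its $x$-derivatives by $f$ does not raise the degree in $D$ beyond the generic bounds, and that no higher-degree terms on the left survive after this substitution. In other words, I would verify that the cancellations producing $\deg\le n$ and $\deg\le m$ are exactly those guaranteed by $g$ solving the symmetry condition~\pref{eq:symcond1}, rather than accidental coincidences; once this is in place the result follows immediately from comparing the two defining degree bounds with $(k,l)=(r,s)$.
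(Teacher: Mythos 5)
Your proof is correct and follows exactly the reasoning the paper itself uses (the paper states this proposition with no explicit proof beyond the preceding paragraph, which observes that the linearisation of a genuine symmetry satisfies~\prettyref{eq:symcond}, so the left hand sides of~\prettyref{eq:fro1}--\prettyref{eq:fro2} equal $U_\tau$ and $V_\tau$ of degrees at most $n$ and $m$). Your degree count with $(k,l)=(r,s)$ is precisely the intended argument, and your remark that the on-shell substitution only alters coefficients, not powers of $D$, correctly disposes of the one point needing care.
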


The space of formal symmetries of a given rank has the same properties than the space of fro's. It is a linear space over the constants, the product of two formal symmetries of the same rank is a formal symmetry of the same rank, and the inverse of a formal symmetry is a formal symmetry of the same rank. 
\begin{prop}
If an equation~\pref{eq:eq} admits infinite higher symmetries of arbitrarily high order, then it admits a non-trivial formal recursion operator~$\R=L+MD_t$, i.e.~there exist nontrivial local solutions~\pref{eq:psLM} to~\pref{eq:fro1}--\pref{eq:fro2}.
\end{prop}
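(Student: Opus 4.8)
The plan is to build the formal recursion operator coefficient by coefficient, using the hierarchy of symmetries only to guarantee that the construction never stalls. By hypothesis there is a sequence of symmetries $g_i$ of the form \eqref{eq:sym} whose orders $(r_i,s_i)$ grow without bound. By the preceding proposition each linearisation $\mathcal{G}_i=L_i+M_iD_t$ is a formal symmetry of rank $(r_i,s_i)$, that is, a pseudodifferential operator for which the left-hand sides of \eqref{eq:fro1}--\eqref{eq:fro2} have degree dropping arbitrarily far below their leading degree as $i\to\infty$. A genuine formal recursion operator is precisely a pseudodifferential operator \eqref{eq:psLM} for which those left-hand sides vanish identically, i.e.\ a formal symmetry of infinite rank, so the task is to promote this unbounded family of finite-rank approximations into a single exact solution.

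First I would set up the recursion that defines a fro order by order. Fixing admissible leading degrees $r,s$ for $L$ and $M$, I substitute \eqref{eq:psLM} into \eqref{eq:fro1}--\eqref{eq:fro2} and collect the result by powers of $D$. Since $U$, $V$ are fixed data, the highest-order terms yield the leading-order equations fixing the leading coefficients $l_r$, $m_s$, while the coefficient of each lower power of $D$ then determines the next unknown coefficient of $L$ and $M$ from the ones already computed. Each such equation has the schematic form $D(\ell)=b$, where $b$ is a differential polynomial in the previously found coefficients and in $u_i$, $v_j$; hence each step is solvable exactly when $b$ lies in the image of $D$, and when it is, the new coefficient is fixed up to an additive integration constant. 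In this language a formal symmetry of rank $(k,l)$ is exactly an operator whose coefficients solve this triangular system down to the depth prescribed by $(k,l)$, and the solvability obstructions $b\bmod D\mathfrak{H}$ are the integrability conditions of \eqref{eq:eq}.

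The crucial step, and the one I expect to be the main obstacle, is to show that these obstructions are intrinsic to the equation, i.e.\ independent of the integration constants chosen at earlier stages of the recursion. Granting this, the logic closes quickly: the existence of the formal symmetry $\mathcal{G}_i$ of rank $(r_i,s_i)$ certifies that every obstruction down to the corresponding depth vanishes, and since $(r_i,s_i)\to(\infty,\infty)$ every obstruction vanishes. The point to be checked is that replacing a previously determined coefficient by that coefficient plus a constant alters each later $b$ only by an element of $D\mathfrak{H}$, so that the class $b\bmod D\mathfrak{H}$ --- and hence its vanishing --- does not depend on the choices made; this is where the homogeneous, triangular structure of \eqref{eq:fro1}--\eqref{eq:fro2} must be exploited, and it is the genuine content of the statement.

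With the obstructions shown to vanish at every order, I would finish by fixing one definite choice of the free constants (for instance setting them all to zero) and solving the triangular system at every step. This produces pseudodifferential series $L$ and $M$ as in \eqref{eq:psLM} for which the left-hand sides of \eqref{eq:fro1}--\eqref{eq:fro2} vanish to all orders, i.e.\ a formal recursion operator $\mathcal{R}=L+MD_t$. It is nontrivial because its leading coefficients are the normalized, nonzero ones dictated by the top-degree equations, so $\mathcal{R}\neq0$, which is the required conclusion.
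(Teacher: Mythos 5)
Your overall strategy is the one the paper intends: the paper states this proposition without a written proof, relying on the preceding observation that the linearisation of a symmetry of order $(r,s)$ is a formal symmetry of rank $(r,s)$, together with the order-by-order recursive scheme~\pref{eq:eqsml} of Section~4, and your proposal assembles exactly these ingredients (high-rank formal symmetries certify the vanishing of the successive obstructions $b\bmod D\H$, hence the triangular system can be solved to all depths). So the route is the same, not a different one.

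That said, as a proof your proposal is incomplete at precisely the point you flag: you \emph{grant} that the obstruction classes are intrinsic --- independent of the integration constants chosen at earlier stages (and, as the paper separately promises to ``justify a posteriori,'' independent of the leading degree $k$) --- and you do not supply the argument. Without it, the inference ``$\mathcal{G}_i$ solves the top $r_i$ (resp.\ $s_i$) equations of the triangular system, therefore the obstructions of \emph{my} recursion vanish to that depth'' does not close, because $\mathcal{G}_i$ solves the system with its own constants $\lambda$, $\mu$ and the later right-hand sides $A_i$, $B_j$ depend on those constants. The missing lemma is that adding to an already-determined coefficient the homogeneous solution $\mathrm{const}\cdot u_n^{\,i/n}$ changes every subsequent $A_i$, $B_j$ only by a total $x$-derivative, so that solvability at each step is a well-defined property of the equation (given that the leading constant is nonzero, which holds for a genuine symmetry of order exactly $(r_i,s_i)$). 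This is the ``a posteriori'' justification the paper defers to the homogeneity analysis of Section~4; naming it as the main obstacle is correct, but a complete proof must actually carry it out. One further small point: you should say what ``non-trivial'' excludes --- the identity $\mathcal{I}=1+0\,D_t$ is always a solution --- and note that your construction yields an operator of positive degree with nonzero leading coefficient, hence not a constant multiple of $\mathcal{I}$.
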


Formulas~\pref{eq:fro1}--\pref{eq:fro2} can be used to solve the following two problems.
\begin{itemize}
\item Determining if a given equation~\prettyref{eq:eq} is integrable. It could be achieved by finding an admitted formal recursion operator~$\mathcal{R}= L+MD_{t}$, solving~\prettyref{eq:fro1}--\prettyref{eq:fro2}. The operators~$L$ and~$M$
are the unknowns in the equations, whereas~$U$ and~$V$ are data
given by the rhs~$F$ of the given equation.
\item Classifying a family of equations of the form~\prettyref{eq:eq}. The operators~$U$ and~$V$ become unknowns too. In some cases (for some given~$n$, and~$m$) this problem is solvable. In the next section we will study a particular case.
\end{itemize}

\section{Integrable equations of the type~$q_{tt}=f(q,q_{x},q_{xx},q_{xxx},q_{xxxx};q_{t},q_{xt})$}

The finer structure of the space of fro's strongly depends on the
differential order of the rhs of~\prettyref{eq:eq}, i.e.~on~$n$ and~$m$. The case of some third order equations ($n=3$,~$m=1$) was already discused in \cite{HSS}. In the present paper we consider the case of equations with~$n=4$,~$m=1$ i.e.~of the form~\pref{eq:eq4}
\[
q_{tt}=f(q,q_{x},q_{xx},q_{xxx},q_{xxxx};q_{t},q_{xt}),\qquad\pd{f}{q_{xxxx}}\neq0
\]
that produces a more complex structure, and is the one relevant to the classification of Lagrangian systems~\pref{eq:lagrangian}. 

We write the generic formal recursion operator as a pseudodifferential operator~\pref{eq:scfro}. For computational reasons and without loss of generality we initially set~$\deg(L)=r= k$ and~$\deg{M}=s=k-1$, with~$k\in\Z$, writing
\begin{gather*}L=l_{k}D^{k}+l_{k-1}D^{k-1}+\cdots+l_{i}D^{i}+\cdots,\\
M=m_{k-1}D^{k-1}+m_{k-2}D^{k-2}+\cdots+m_{j}D^{j}+\cdots.
\end{gather*}
The first integrability conditions from~\prettyref{eq:fro1} and~\prettyref{eq:fro2}
are the following.

The highest nonzero coefficient in~\prettyref{eq:fro2} is that of
$D^{k+2}$, and the resulting equation is:
\begin{equation}
(k-1)m_{k-1}D_{x}u_{4}-4u_{4}D_{x}m_{k-1}=0.\label{eq:cond21}
\end{equation}
Multiplying by an integrating factor~$u_{4}^{-(k-1)/4-1}$ allows
to write 
\begin{equation}
D_{x}\left[u_{4}^{-(k-1)/4}m_{k-1}\right]=0.\label{eq:obsm1}
\end{equation}
This implies that 
\begin{equation}
m_{k-1}=\mu_{k-1}u_{4}^{(k-1)/4}\label{eq:mkm1}
\end{equation}
 where~$\mu_{k-1}$ is a constant.

The next nonzero coefficient in~\prettyref{eq:fro2} is that of~$D^{k+1}$,
implying that
\begin{multline}
(k-2)m_{k-2}D_{x}u_{4}-4u_{4}D_{x}m_{k-2}-3u_{3}D_{x}m_{k-1}-6u_{4}D_{xx}m_{k-1}\\
+\frac{1}{2}(k-1)m_{k-1}\left[(k-1)D_{xx}u_{4}+2D_{x}u_{3}-D_{xx}u_{4}\right]=0\label{eq:cond22}
\end{multline}
Substituting the value found for~$m_{k-1}$ and using the integrating
factor~$u_{4}^{-(k-2)/4-1}$ yields
\[
D_{x}\left[u_{4}^{-\left(k-2\right)/4}m_{k-2}\right]=\mu_{k-1}\left[\frac{k-1}{4}D_{x}\left(\frac{u_{3}}{u_{4}^{3/4}}\right)+\frac{(k-1)(k-5)}{2}D_{xx}\left(u_{4}^{1/4}\right)\right].
\]
Thus
\begin{equation}
m_{k-2}=\mu_{k-2}u_{4}^{\left(k-2\right)/4}+\mu_{k-1}u_{4}^{\left(k-2\right)/4}\left[\frac{k-1}{4}\frac{u_{3}}{u_{4}^{3/4}}+\frac{(k-1)(k-5)}{8}\frac{D_{x}u_{4}}{u_{4}^{3/4}}\right],\label{eq:mkm2}
\end{equation}
being~$\mu_{k-2}$ a constant. This constant appears because of the linearity of the equations, and introduces a redundancy in our solution~$M$ ($k$ is arbitrary). For illustrative reasons we keep it (and analogous constants) in the next few paragraphs.

The highest nonzero coefficient in the other equation~\prettyref{eq:fro1}
corresponds to~$D^{k+3}$, yielding
\begin{multline}
kl_{k}D_{x}u_{4}-4u_{4}D_{x}l_{k}+2u_{4}D_{t}m_{k-1}+m_{k-1}D_{t}u_{4}\\
-u_{4}v_{1}D_{x}m_{k-1}+(k-1)u_{4}m_{k-1}D_{x}v_{1}=0.\label{eq:cond11}
\end{multline}
Using again an integrating factor~$u_{4}^{-(k-1)/4-1}$ and the expression for~$m_{k-1}$, it is found that
\[
D_{x}\left[u_{4}^{-k/4}l_{k}\right]=\mu_{k-1}\left[\frac{k-1}{4}D_{x}\left(\frac{v_{1}}{u_{4}^{1/4}}\right)-\frac{k+1}{2}D_{t}\left(\frac{1}{u_{4}^{1/4}}\right)\right]
\]
A first integrability condition appears here:~$\rho_{0}\coloneqq u_{4}^{-1/4}$
must be the density of a conservation law
\begin{equation}
D_{t}\left(\frac{1}{u_{4}^{1/4}}\right)=D_{x}\sigma_{0}.\label{eq:cl0}
\end{equation}
This is an obstruction that narrows the class of equations~\prettyref{eq:eq}
that are integrable. Supposing that this condition is met, we set 
\begin{equation}
l_{k}=\lambda_{k}u_{4}^{k/4}+\mu_{k-1}\left[\frac{k-1}{4}u_{4}^{(k-1)/4}v_{1}-\frac{k+1}{2}u_{4}^{k/4}\sigma_{0}\right].\label{eq:lk1}
\end{equation}

The coefficient in~\prettyref{eq:fro1} of~$D^{k+2}$, after substituting the definition of a second integrability condition
\[
D_{t}\left(\frac{u_{3}}{u_{4}}\right)=D_{x}\sigma_{1},
\]
implies that
\begin{multline*}
D_{x}\left[u_{4}^{-(k-1)/4}l_{k-1}\right]=
\mu_{k-2}\left[\frac{k-2}{4}D_{x}\left(\frac{v_{1}}{u_{4}^{1/4}}\right)-\frac{k}{2}D_{x}\sigma_{0}\right]\\
+\lambda_{k}\left[\frac{1}{4}kD_{x}\left(\frac{u_{3}}{u_{4}^{3/4}}\right)+\frac{k(k-4)}{2}D_{x}^{2}\left(u_{4}^{1/4}\right)\right]
\\
+\mu_{k-1}\left[
\frac{\left(k^{2}-3k-2\right)\left(k-1\right)}{4}D_{x}\left(u_{4}^{1/4}D_{x}\sigma_{0}\right)\right.
\\
{}+\frac{\left(k-1\right)^{2}}{16}D_{x}\left(\frac{u_{3}v_{1}}{u_{4}}\right)+\frac{\left(k^{2}-5k+5\right)\left(k-1\right)}{32}D_{x}\left(\frac{v_{1}D_{x}u_{4}}{u_{4}}\right)
\\
-\frac{k(k+1)}{8}D_{x}\left(\frac{\sigma_{0}u_{3}}{u_{4}^{3/4}}\right)+\frac{(4-k)k(k+1)}{4}D_{x}^{2}\left(u_{4}^{1/4}\sigma_{0}\right)+\frac{k+1}{8}D_{x}\sigma_{1}
\\
\left.\phantom{\frac{\left(k^{2}-3k+4\right)k}{32}}+\frac{k-1}{4}D_{x}v_{0}+\frac{\left(k-1\right)(k-5)}{8}D_{xx}v_{1}
\right]
\end{multline*}
and we can find the expression of~$l_{k-1}$ accordingly, in the form
$l_{k-1}=\lambda_{k-1}u_4^{(k-1)/4}+\lambda_{k}\left[\cdots\right]+\mu_{k-2}\left[\cdots\right]+\mu_{k-1}\left[\cdots\right]$.

The coefficient in~\prettyref{eq:fro2} of~$D^{k}$ leads to another
conservation law
\[
D_{t}\left(\frac{v_{1}}{\sqrt{u_{4}}}-2\frac{\sigma_{0}}{u_{4}^{1/4}}\right)=D_{x}\sigma_{2}
\]
and to an expression of~$m_{k-3}$ of the form~$m_{k-3}=\mu_{k-3}u_{4}^{(k-3)/4}+\lambda_{k}\left[\cdots\right]+\mu_{k-2}\left[\cdots\right]+\mu_{k-1}\left[\cdots\right]$.

The computation continues producing an infinite system of equations that can be solved with a recursive scheme. In fact, in~\prettyref{eq:fro1}--\prettyref{eq:fro2} the terms with highest order in~$D$ come from~$[L,U]$ and~$[M,U]$ and produce equations where the unknowns~$m_{i}$,~$l_{j}$ appear in one term  that can be isolated in the lhs of two equations of the form
\begin{equation}\label{eq:eqsml}
\begin{aligned}D_{x}\left(u_{4}^{-i/4}l_{i}\right)&=A_{i},\\
D_{x}\left(u_{4}^{-j/4}m_{j}\right)&=B_{j},\end{aligned}\qquad\begin{aligned}
 i&=k, k-1,\ldots,\\ j&=k-1, k-2,\ldots.
\end{aligned}
\end{equation}
The right hand sides~$A_{i},$~$B_{j}$ of these equations must be
total derivatives, so there appear an infinite number of integrability obstructions (that we will see are in the form of conservation laws) and arbitrary constants of integration in each of the terms of~$L$ and~$M$. 
We will justify a posteriori that these obstructions do not depend on~$k$.

In practice, one can implement the procedure taking~$k=0$ and solving~\prettyref{eq:eqsml} in the order~$j=0,$~$j=-1$,~$i=1$,~$i=0$ and then alternating~$j=-l$,~$i=-l+1$ for~$l\geq2$. Of course, to completely solve the problem of integrability in this way one should perform an infinite number of steps. Usually, a small number of steps quickly reduces the possible subfamilies of equations~\prettyref{eq:eq} admitting fro's, and other procedures are used to prove full integrability of these remaining cases. For example, finding an explicit recursion operator (and seed or root symmetries) is enough to prove integrability.

In the following results we deal with the space of formal recursion operators of an equation~\pref{eq:eq4} that satisfies all the integrability conditions imposed by~\pref{eq:eqsml}, and~$\lambda_i$ and~$\mu_j$ are the integration constants arising from its integration.

\begin{prop} A basis of the vector space the formal recursion operators of an integrable equation~\pref{eq:eq4} is formed by the operators
\[
\begin{gathered}\mathcal{L}_{i}\coloneqq\left\{ L+MD_{t}:\ \lambda_{i}=1;\ \lambda_{i'}=0,\,i'\neq i;\ \mu_{j}=0\right\}, \\
\mathcal{M}_{j}\coloneqq\left\{ L+MD_{t}:\ \lambda_{i}=0;\ \mu_{j}=1;\ \mu_{j'}=0,\,j'\neq j\right\} ,
\end{gathered}
\qquad 
i,j\in\Z.
\]
\end{prop}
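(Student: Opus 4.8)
The plan is to show that the recursive integration scheme described above sets up a linear isomorphism between the space $\mathfrak{F}$ of formal recursion operators and the space of admissible sequences of integration constants $(\{\lambda_i\},\{\mu_j\})_{i,j\in\Z}$, under which the distinguished operators $\mathcal{L}_i$ and $\mathcal{M}_j$ are exactly the images of the coordinate sequences. First I would record the decisive structural fact: the determining equations~\pref{eq:fro1}--\pref{eq:fro2} are homogeneous and linear in the pair $(L,M)$, since every term is of first degree in the unknown coefficients $l_i$, $m_j$ while $U$, $V$ and their derivatives are data, and no term survives when $L=M=0$. Consequently, in the recursive system~\pref{eq:eqsml} each right hand side $A_i$, $B_j$ is a homogeneous linear expression in the coefficients already computed at higher order, and hence, inductively, in the integration constants introduced so far.

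Next I would make the construction precise. Assuming, as in the hypothesis, that every obstruction $A_i$, $B_j$ is a total $D_x$-derivative, each equation in~\pref{eq:eqsml} can be integrated: it fixes $u_4^{-i/4}l_i$ (resp.\ $u_4^{-j/4}m_j$) up to one additive constant, which we name $\lambda_i$ (resp.\ $\mu_j$). Fixing once and for all a choice of local antiderivative, taken linear in the data, this determines each $l_i$, $m_j$ as a homogeneous linear function of the full collection of constants, with no constant-independent part. Running the scheme in the order indicated produces, for every prescribed sequence $(\{\lambda_i\},\{\mu_j\})$, a unique $\R=L+MD_t\in\mathfrak{F}$; call this assignment $\Psi$. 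Conversely, given any $\R\in\mathfrak{F}$, its coefficients satisfy~\pref{eq:eqsml}, so peeling off the antiderivatives from the top down recovers a unique sequence of constants, which is the inverse assignment. Both maps are linear because the whole scheme is, so $\Psi$ is a linear isomorphism.

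With $\Psi$ in hand, injectivity is the point that secures linear independence, and it follows from the homogeneity established first: if all $\lambda_i$ and $\mu_j$ vanish, then every $l_i$, $m_j$ vanishes, so $\R=0$. Surjectivity, that is spanning, is the readout map above. By construction $\mathcal{L}_i$ and $\mathcal{M}_j$ are the images under $\Psi$ of the coordinate sequences (respectively $\lambda_i=1$ with all other constants zero, and $\mu_j=1$ with all other constants zero), with $\mathcal{L}_i$ having leading term $u_4^{i/4}D^i$ in $L$ and $\mathcal{M}_j$ having leading term $u_4^{j/4}D^j$ in $M$. Being the images under a linear isomorphism of the coordinate system of the sequence space, they are linearly independent and they generate $\mathfrak{F}$, i.e.\ they form a basis.

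The one point that needs care, and which I would flag as the main obstacle, is the precise meaning of \emph{basis} in this filtered setting. A generic fro of degree $k$ has nonvanishing coefficients down to $-\infty$, so it corresponds to a sequence with support bounded above but infinite, and $\R=\sum_{i\le k}\lambda_i\mathcal{L}_i+\sum_{j\le k-1}\mu_j\mathcal{M}_j$ is in general an infinite sum. This sum is nonetheless a well-defined formal pseudodifferential operator: because the degrees of $\mathcal{L}_i$ and $\mathcal{M}_j$ grow with $i$ and $j$ while the support of the sequence is bounded above, only finitely many terms contribute to the coefficient of any fixed power $D^d$. Hence the sum is locally finite in the degree filtration and reproduces $\R$ coefficient by coefficient, and $\{\mathcal{L}_i,\mathcal{M}_j\}$ is a basis in the sense that every fro is a \emph{unique} such locally finite combination, which is precisely the content of the isomorphism $\Psi$.
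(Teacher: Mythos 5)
Your argument is correct and takes essentially the same route as the paper, which states this proposition without a separate proof precisely because it is meant to be read off from the recursive integration scheme~\pref{eq:eqsml}: linearity and homogeneity of~\pref{eq:fro1}--\pref{eq:fro2} in $(L,M)$ mean each coefficient is fixed up to one integration constant, so every fro is the (locally finite) combination of the coordinate solutions $\mathcal{L}_i$, $\mathcal{M}_j$ with coefficients $\lambda_i$, $\mu_j$. Your explicit remark that the relevant notion of basis involves infinite sums that are locally finite in the degree filtration is a useful clarification the paper leaves implicit.
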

The leading coefficients of~$\mathcal{L}_{i}$ and~$\M_{j}$ are
\begin{align}
\label{eq:Li}
\mathcal{L}_{i}= & u_{4}^{i/4}D^{i}+\left[\frac{i}{4}u_{3}u_{4}^{\frac{i-1}{4}-\frac{3}{4}}+\frac{\left(i-4\right)i}{2}u_{4}^{\frac{i-1}{4}}D_{x}(u_{4}^{1/4})\right]\,D^{i-1}+\cdots 
\\
\nonumber
& +\left\{ \left(\frac{i}{4}u_{4}^{\frac{i-4}{4}}v_{1}-\frac{i}{2}\sigma_{0}u_{4}^{\frac{i-3}{4}}\right)\,D^{i-3}+\cdots\right\} D_{t},
\\
\label{eq:Mj}
\mathcal{M}_{j}= & \left[\frac{j}{4}v_{1}u_{4}^{\frac{j}{4}}-\frac{j+2}{2}\sigma_{0}u_{4}^{\frac{j+1}{4}}\right]\,D^{j+1}+\cdots \\
\nonumber
& +\left\{ u_{4}^{\frac{j}{4}}\,D^{j}+\left[\frac{j}{4}u_{3}u_{4}^{\frac{j-4}{4}}+\frac{\left(j-4\right))j}{2}D_{x}(u_{4}^{1/4})u_{4}^{\frac{j-1}{4}}\right]\,D^{j-1}+\cdots\right\} D_{t}
\end{align}

We have the following important technical results.
\begin{lem}
\label{lem:one}
\ \\[-4mm]
\begin{enumerate}
\item $\mathcal{L}_{i}=\mathcal{L}_{1}^{i}$ for all~$i\in\mathbf{Z}$.
\item $\mathcal{M}_{-2}^{2}=\mathcal{L}_{0}=\mathcal{I}$.
\item\label{enu:comm} $\M_{-2}\L_1=\L_1\M_{-2}=\M_{-1}$.
\item $\M_j=\M_{-2}\L_{j+2}$,\quad $\L_i=\M_{-2}\M_{i-2}$.
for all~$i,j\in\mathbf{Z}$.
\end{enumerate}
\end{lem}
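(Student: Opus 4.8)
The plan is to reduce every identity to a statement about a finite-dimensional homogeneous subspace of the field~$\mathfrak{F}$ of fro's, and then to pin down the relevant coefficients by comparing leading terms. The three ingredients are: the field/graded-algebra structure of~$\mathfrak{F}$ (so that every product written in the lemma is again a fro), the fact that~$\{\L_i,\M_j\}$ is a basis, and the explicit leading terms~\pref{eq:Li}--\pref{eq:Mj}.

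First I would make the grading precise. Counting~$D_t$ as two powers of~$D$ (because~$D_t^2\doteq U+VD_t$ with~$\deg U=4$), I assign to a homogeneous fro~$\R=L+MD_t$ the total degree~$d=\max(\deg L,\ \deg M+2)$. Inspecting~\pref{eq:Li}--\pref{eq:Mj} shows that~$\L_i$ is homogeneous of total degree~$i$ (its~$L$-part, of degree~$i$, dominates), while~$\M_j$ is homogeneous of total degree~$j+2$ (its~$MD_t$-part dominates). Hence the homogeneous component~$\mathfrak{F}_d$ is the two-dimensional space~$\langle \L_d,\ \M_{d-2}\rangle$, and a member of~$\mathfrak{F}_d$ is completely determined by two leading coordinates: the coefficient of~$D^d$ in~$L$ (which reads off the~$\L_d$-component, since~$\M_{d-2}$ has~$\deg L=d-1$) and the coefficient of~$D^{d-2}$ in~$M$ (which reads off the~$\M_{d-2}$-component, since~$\L_d$ has~$\deg M=d-3$). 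Because~$\mathfrak{F}$ is a graded algebra, each product in the lemma is homogeneous of the total degree obtained by adding the degrees of its factors, so it lands in one such~$\mathfrak{F}_d$; the task is then only to compute its two leading coordinates from the product formula for fro's, using that the leading symbol of a product of pseudodifferential operators is the product of the leading symbols.

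I would then treat the parts in the order (2), (1), (3), (4). For (2), the product~$\M_{-2}^2$ has total degree~$0$. In its~$L$-part~$L_1L_2+M_1(L_2)_t+M_1M_2U$ the term~$M_1M_2U$ jumps up to degree~$(-2)+(-2)+4=0$ with leading symbol~$u_4^{-1/2}\cdot u_4^{-1/2}\cdot u_4=1$, so the~$D^0$-coefficient of~$L$ equals~$1$, the same as for~$\L_0$; meanwhile every term of the~$M$-part has degree~$\le -3<-2$, so the~$\M_{-2}$-coordinate vanishes. Thus~$\M_{-2}^2=\L_0$, and~$\L_0=\mathcal{I}$ follows because~$\mathcal{I}=1$ is itself a homogeneous degree-$0$ fro with the same two leading coordinates. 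For (1) I argue by induction on~$i\ge1$: assuming~$\L_1^{\,i-1}=\L_{i-1}$, the~$L$-part of~$\L_1^{\,i}=\L_1\L_{i-1}$ has leading symbol~$u_4^{1/4}\cdot u_4^{(i-1)/4}=u_4^{i/4}$ at~$D^i$, giving~$\L_i$-coordinate~$1$, while a degree count shows its~$M$-part has degree~$\le i-3<i-2$, so the~$\M_{i-2}$-coordinate vanishes; hence~$\L_1^{\,i}=\L_i$. The cases~$i=0$ and~$i<0$ follow from~$\L_0=\mathcal{I}$ and from the existence of inverses in~$\mathfrak{F}$ (a leading-term match gives~$\L_1\L_{-1}=\L_0=\mathcal{I}$, so~$\L_1^{-1}=\L_{-1}$, and downward induction completes the range). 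Part (3) uses that a field is commutative, so~$\M_{-2}\L_1=\L_1\M_{-2}$ for free; it then suffices to compute, say,~$\L_1\M_{-2}$, of total degree~$1$. A degree count of its~$L$-part (whose terms~$L_1L_2$ and~$M_1M_2U$ both sit at degree~$0$, below~$1$) gives~$\L_1$-coordinate~$0$, while the~$M$-part's only top term~$L_1M_2$ has symbol~$u_4^{1/4}\cdot u_4^{-1/2}=u_4^{-1/4}$ at~$D^{-1}$, matching the leading~$M$-coefficient of~$\M_{-1}$ and giving~$\M_{-1}$-coordinate~$1$; hence the product is~$\M_{-1}$.

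Finally, for (4) I would prove~$\M_j=\M_{-2}\L_{j+2}$ directly: the product has total degree~$j+2$, its~$L$-part has degree~$\le j+1<j+2$ (so the~$\L_{j+2}$-coordinate is~$0$), and the term~$M_1L_2$ of its~$M$-part has leading symbol~$u_4^{-1/2}\cdot u_4^{(j+2)/4}=u_4^{j/4}$ at~$D^{j}$, matching~$\M_j$ and giving coordinate~$1$; combined with~$\L_{j+2}=\L_1^{\,j+2}$ from (1) this is the first identity. The second identity is then purely algebraic: using the first with~$j$ replaced by~$i-2$ gives~$\M_{i-2}=\M_{-2}\L_i$, whence~$\M_{-2}\M_{i-2}=\M_{-2}^2\L_i=\L_i$ by (2). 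The main obstacle is not any single computation but the bookkeeping of the grading: I must verify that the fro product formula and the on-shell substitution~$\doteq$ genuinely respect the total degree (so that products are homogeneous and fall into a two-dimensional slice), and I must handle with care the two places—$\M_{-2}^2$ and~$\L_1\M_{-2}$—where a~$U$-term raises the~$D$-degree of a factor so that two contributions collide at the same leading order; getting those coefficients right is what makes the identifications exact rather than merely correct up to lower-order terms.
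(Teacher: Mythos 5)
Your overall strategy --- reduce each product to a two-dimensional ``homogeneous slice'' spanned by $\L_d$ and $\M_{d-2}$, then match the two leading coefficients --- is exactly the paper's, and all of your explicit leading-symbol computations (for $\M_{-2}^2$, $\L_1\M_{-2}$, $\L_1\L_{i-1}$ and $\M_{-2}\L_{j+2}$) agree with what the paper obtains. The gap is precisely the step you yourself flag as the ``main obstacle'': the justification that each product actually lies in a two-dimensional slice. The degree you propose, $d=\max(\deg L,\deg M+2)$, counts only powers of $D$ and $D_t$; with respect to it the operators $\L_i$ and $\M_j$ are \emph{not} homogeneous --- for instance the second term $\frac{i}{4}u_3u_4^{\frac{i-1}{4}-\frac34}D^{i-1}$ of $\L_i$ in~\pref{eq:Li} sits at $D$-degree $i-1$, not $i$. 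So what you actually have is a filtration, not a grading, and your leading-coefficient match only shows, say, $\L_1\M_{-2}=\M_{-1}+(\text{fro of lower leading degree})$; a priori that error term could be any combination $c_0\L_0+c_{-1}\M_{-2}+\cdots$, which your two top coordinates do not detect.

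The paper closes this gap by assigning weights not only to $D$ and $D_t$ but also to the coefficient functions, $\weight{u_i}=2w_t-iw_x$ and $\weight{v_j}=w_t-jw_x$ as in~\pref{eq:weights}. Under this two-parameter grading the determining equations~\pref{eq:fro1}--\pref{eq:fro2} are weight-homogeneous, \emph{every} term of $\L_i$ has the same weight $\frac{i}{2}w_t$ and every term of $\M_j$ has weight $\frac{j+2}{2}w_t$, and two basis elements share a weight only in the pairs $(\L_i,\M_{i-2})$. This is what forces a product such as $\L_1\M_{-2}$ to be a \emph{constant} combination $\alpha\L_1+\beta\M_{-1}$ with no lower-order admixture, after which your coefficient computations do finish the proof; your degree is just the specialisation $w_x=1$, $w_t=2$ of the weight of the leading term alone, so the fix is to adopt the full weight system rather than the $D$-count. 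A minor secondary point: you take $\M_{-2}\L_1=\L_1\M_{-2}$ ``for free'' from commutativity of the field $\mathfrak{F}$, but commutativity is exactly the kind of statement the Proposition does not argue; the paper instead checks the reversed product by the same homogeneity-plus-leading-term comparison, which costs nothing extra once the grading is in place.
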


\begin{proof}
A sketch of the proof is as follows. Equations~\prettyref{eq:fro1}--\prettyref{eq:fro2}
have an underlying homogeneity, and their solutions can be expressed as sums of homogeneous solutions. The weights of the homogeneity can be defined as follows
\begin{equation}\label{eq:weights}
\begin{gathered}
\weight D=w_{x},\quad\weight{D_t}=w_{t},\\
\weight{u_{i}}=2w_{t}-iw_{x},\quad\weight{v_{j}}=w_{t}-jw_{x}.
\end{gathered}
\end{equation}
Any equation  provided by the coefficient of some power~$D^{i}$ in~\prettyref{eq:fro1} or~\prettyref{eq:fro2} is homogeneous, and thus the operators~$\L_i$ and~$\M_j$ are homogeneous. Expressions~\pref{eq:Li}--\pref{eq:Mj} imply that~$\weight{\mathcal{L}_{i}}=(2w_t-4w_x)\frac{i}{4}+iw_x$ and~$\weight{\mathcal{M}_{j}}=(2w_t-4w_x)\frac{j}{4}+jw_x+w_t$ i.e.
\begin{equation}\label{eq:wlm}
\weight{\mathcal{L}_{i}}=\frac{i}{2}w_{t},\quad\weight{\mathcal{M}_{j}}=\frac{j+2}{2}w_{t}.
\end{equation}
Thus,~$\weight{\mathcal{L}_{i}}=\weight{\mathcal{M}_{i-2}}$. 
Let us prove (\pref{enu:comm}). Homegeneity implies that
\[
\L_{1}\M_{-2}=\alpha\mathcal{L}_{1}+\beta\mathcal{M}_{-1},\qquad\alpha,\beta\text{ constants}.
\]
Using~\pref{eq:Li}--\pref{eq:Mj} and multiplying
\[
\L_{1}\M_{-2}=-\left(\frac{v_1}{4 u_4^{1/4}}+\frac{\sigma_{0}}{2}\right)D^0+\cdots+\left[\frac{1}{u_4^{1/4}}\, D^{-1} +\cdots\right]D_{t} 
\]
so~\prettyref{eq:Li}--\prettyref{eq:Mj} imply that~$\alpha=0$ and~$\beta=1$. And the leading terms of~$\M_{-2}\L_{1}$ coincide with those of~$\mathcal{L}_{1}\mathcal{M}_{-1}$, so~$\mathcal{M}_{-1}\mathcal{L}_{1}=\M_0$ too. The remaining proofs are analogous. 
\end{proof}
For ease of notation, in what follows we continue using the notation~$s^o+s^{o-1}+\cdots$ to denote a formal series~$\sum_{i=0}^{\infty} s^{o-i}$.
\begin{lem}
\label{lem:AnyPse} Any pseudodifferential operator~$\mathcal{O}$
of degree~$(r,s)$
\[
\mathcal{O}=o_{r}D^{r}+o_{r-1}D^{r-1}+\cdots+\left[p_{s}D^{s}+p_{s-1}D^{s-1}+\cdots\right]D_{t}
\]
can be written as
\[
\mathcal{O}=a_{k}\mathcal{L}_{1}^{k}+a_{k-1}\mathcal{L}_{1}^{k-1}+\cdots+\left[b_{l}\mathcal{M}_{0}\mathcal{L}_{1}^{l}+b_{m-1}\mathcal{M}_{0}\mathcal{L}_{1}^{l-1}+\cdots\right]D_{t}
\]
with~$a_{i}$,~$b_{j}$ appropriate diferential functions.
\end{lem}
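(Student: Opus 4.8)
The plan is to prove the representation by a leading--symbol elimination, organising everything by the order grading in which $D$ counts as order $1$ and $D_{t}$ as order $2$; this is exactly the homogeneity weight~\eqref{eq:weights} evaluated at $w_{x}=1$, $w_{t}=2$, for which $\weight{u_4}=0$ and \eqref{eq:wlm} gives $\weight{\mathcal{L}_{i}}=i$ and $\weight{\mathcal{M}_{j}}=j+2$. First I would record, using Lemma~\ref{lem:one}, that $\mathcal{L}_{1}^{i}=\mathcal{L}_{i}$ and $\mathcal{M}_{0}\mathcal{L}_{1}^{j}=\mathcal{M}_{j}$, so that the operators on the right--hand side are precisely the basis elements whose expansions are~\eqref{eq:Li}--\eqref{eq:Mj}. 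From those expansions one reads off the top--order part of each: $\mathcal{L}_{i}$ is homogeneous of order $i$ with top symbol $u_{4}^{i/4}D^{i}$, and its $D_{t}$--part begins only at $D^{i-3}$, hence at order $i-1$; while $\mathcal{M}_{j}$ is homogeneous of order $j+2$ with top symbol $u_{4}^{j/4}D^{j}D_{t}$, and its $D_{t}$--free part begins only at $D^{j+1}$, hence at order $j+1$. Thus the families $\{\mathcal{L}_{i}\}$ and $\{\mathcal{M}_{j}\}$ supply, as top symbols, exactly the two monomials $D^{i}$ and $D^{j}D_{t}$ available at each order, and these do not interfere at top order.

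With this in hand the elimination is essentially diagonal. Given $\mathcal{O}$ of order $d_{0}=\max(r,s+2)$, its order--$d_{0}$ part can only be $o_{d_{0}}D^{d_{0}}+p_{d_{0}-2}D^{d_{0}-2}D_{t}$ (one or both terms present). I would then set $a_{d_{0}}=o_{d_{0}}\,u_{4}^{-d_{0}/4}$ and $b_{d_{0}-2}=p_{d_{0}-2}\,u_{4}^{-(d_{0}-2)/4}$, which is legitimate because $\partial f/\partial q_{xxxx}=u_{4}\neq0$ makes every power $u_{4}^{\ast/4}$ invertible; left multiplication by a differential function changes neither the order nor the leading monomial. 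The difference $\mathcal{O}-a_{d_{0}}\mathcal{L}_{1}^{d_{0}}-b_{d_{0}-2}\,\mathcal{M}_{0}\mathcal{L}_{1}^{d_{0}-2}$ then has order at most $d_{0}-1$, the order--$d_{0}$ contributions having been cancelled exactly. Iterating downward determines all the coefficients and produces the asserted series, whose $D_{t}$--free part is $\sum_{i}a_{i}\mathcal{L}_{1}^{i}$ and whose $D_{t}$--carrying part is supplied by the $\mathcal{M}_{0}\mathcal{L}_{1}^{j}$; the series converges in the graded (pseudodifferential) sense since each order is fixed after finitely many steps.

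The step that needs the most care is the order bookkeeping that makes the recursion well posed: I must check that subtracting $a_{d_{0}}\mathcal{L}_{d_{0}}$ and $b_{d_{0}-2}\mathcal{M}_{d_{0}-2}$ reintroduces nothing of order $\geq d_{0}$ and strictly lowers the order, so that the two leading coefficients are uniquely pinned down at each stage. This rests on the two structural facts isolated above---that the $D_{t}$--part of $\mathcal{L}_{i}$ and the $D_{t}$--free part of $\mathcal{M}_{j}$ lie strictly below the respective top orders---both immediate from the explicit leading terms~\eqref{eq:Li}--\eqref{eq:Mj}, together with the standing nondegeneracy $u_{4}\neq0$. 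Once the grading is set up, the remaining verification is routine.
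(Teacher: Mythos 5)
Your proof is correct and takes essentially the same route as the paper's: subtract $o_{r}u_{4}^{-r/4}\mathcal{L}_{1}^{r}$ (resp.\ $p_{s}u_{4}^{-s/4}\mathcal{M}_{0}\mathcal{L}_{1}^{s}$) to strictly lower the degree, then induct, summing the subtracted pieces. The only cosmetic difference is that you merge the paper's two cases ($r>s+1$ versus $r\le s+1$) into a single step via the combined grading $\deg D=1$, $\deg D_{t}=2$, and you spell out the order bookkeeping that the paper leaves implicit.
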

\begin{proof}
If~$r>s+1$ the operator~$\mathcal{O}-o_{r}u_{4}^{-r/4}\L_{1}^{r}$
has degree~$(r-1,s)$. If~$r\leq s+1$ the operator~$\mathcal{O}-p_{s}u_{4}^{-s/4}\M_{-2}\L_1^{s+2}=\mathcal{O}-p_{s}u_{4}^{-s/4}\M_{0}\L_1^{s}$ is of order~$(r,s-1)$. Proceeding by induction implies that the sum of all the subtracted operators is~$\mathcal{O}$. 
\end{proof}
\begin{thm}\label{thm:one}
Any formal recursion operator~$\R=L+MD_{t}$ of an integrable equation~\pref{eq:eq4} can be expressed as
\[
\R=\alpha_{k}\mathcal{L}_{1}^{k}+\alpha_{k-1}\mathcal{L}_{1}^{k-1}+\cdots+\left[\beta_{l}\mathcal{M}_{0}\mathcal{L}_{1}^{l}+\beta_{l-1}\mathcal{M}_{0}\mathcal{L}_{1}^{l-1}+\cdots\right]D_{t}
\]
with~$\alpha_{i}$,~$\beta_{j}$ appropriate constants.
\end{thm}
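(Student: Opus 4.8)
The plan is to obtain the statement as a mere change of generating set inside an expansion that is already available. Two earlier results do all the work: the Proposition exhibiting $\{\mathcal{L}_i,\mathcal{M}_j\}_{i,j\in\mathbf{Z}}$ as a basis of the space of fro's \emph{over the constants}, and Lemma~\ref{lem:one}, which rewrites each basis element in terms of the two distinguished operators $\mathcal{L}_1$ and $\mathcal{M}_0$. Morally, the theorem says that the field of fro's is generated \emph{as an algebra} by $\mathcal{L}_1$ and $\mathcal{M}_0$ with constant structure coefficients, so the proof is just bookkeeping on top of those two facts.

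First I would invoke the basis Proposition to write an arbitrary fro as
\[
\mathcal{R}=\sum_{i}\alpha_i\mathcal{L}_i+\sum_{j}\beta_j\mathcal{M}_j,\qquad\alpha_i,\beta_j\ \text{constant},
\]
both sums being descending series capped above by $\deg(\mathcal{R})$; this converges degree by degree because each power of $D$ is met by only finitely many $\mathcal{L}_i,\mathcal{M}_j$. Then I would substitute Lemma~\ref{lem:one}: part~(1) gives $\mathcal{L}_i=\mathcal{L}_1^{\,i}$, while for the second family I combine part~(4) and part~(1) into $\mathcal{M}_j=\mathcal{M}_{-2}\mathcal{L}_{j+2}=\mathcal{M}_{-2}\mathcal{L}_1^{\,j+2}$. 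Since $\mathcal{M}_0=\mathcal{M}_{-2}\mathcal{L}_1^{\,2}$ and, by part~(3), $\mathcal{M}_{-2}$ commutes with $\mathcal{L}_1$, this collapses to $\mathcal{M}_j=\mathcal{M}_0\mathcal{L}_1^{\,j}$; substituting reassembles $\mathcal{R}$ into the two blocks of the statement with the very same constants $\alpha_i,\beta_j$.

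A slightly more self-contained variant starts from Lemma~\ref{lem:AnyPse}, which already produces the target shape $\sum_i a_i\mathcal{L}_1^{\,i}+\sum_j b_j\mathcal{M}_0\mathcal{L}_1^{\,j}$ for \emph{any} pseudodifferential operator, but only with \emph{differential-function} coefficients. The theorem is then the assertion that, for a genuine fro, those coefficients are constant. I would prove this by a top-down peeling induction: the leading coefficient of $\mathcal{R}$ satisfies the first determining equation (of the type~\prettyref{eq:cond21}, and more generally the pair~\prettyref{eq:eqsml}), whose homogeneous solution, once the integrability conditions are assumed, is forced to be a constant multiple of the corresponding power of $u_4$; hence the residual coefficient $a_k$ (or $b_l$) in the $\mathcal{L}_1,\mathcal{M}_0$ coordinates is a pure constant, $\mathcal{R}-a_k\mathcal{L}_1^{\,k}$ is again a fro of strictly lower degree, and the induction continues.

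Whichever route is taken, the only real content is the \emph{constancy} of the coefficients, which is also the unique place where the integrability hypothesis enters. The main obstacle is therefore to certify that matching leading coefficients forces a constant and not an arbitrary element of $\ker D_x$ acting on a nonconstant homogeneous combination; this is exactly what equations~\prettyref{eq:eqsml} together with the imposed conservation laws guarantee, and it is already packaged inside the basis Proposition. The second point one must not skip is the commutativity $\mathcal{M}_{-2}\mathcal{L}_1=\mathcal{L}_1\mathcal{M}_{-2}$ of Lemma~\ref{lem:one}(3), without which the clean reordering $\mathcal{M}_j=\mathcal{M}_0\mathcal{L}_1^{\,j}$ and the grouping into a single $D_t$-block would be unavailable.
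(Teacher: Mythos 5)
Your proposal is correct, and your ``more self-contained variant'' is precisely the paper's own proof: expand via Lemma~\ref{lem:AnyPse}, note that the determining equations force the leading coefficient to be $\alpha_r u_4^{r/4}$ with $\alpha_r$ constant (resp.\ the analogous form when $r\le s+1$), subtract $\alpha_r\mathcal{L}_1^r$ using linearity, and induct. Your first route is only a repackaging of this through the basis Proposition, which the paper states without proof and which already encodes the constancy of the coefficients, so the substantive argument is the same in either case.
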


\begin{proof}
If~$r>s+1$ equation~\prettyref{eq:Li} implies that the coefficient~$l_{r}$
must be of the form~$l_{r}=\alpha_{r}u_{4}^{r/4}$, with~$\alpha_{r}$
constant. Linearity implies that~$\R-\alpha_{r}\L_{1}^{r}$ is a
recursion operator. The case~$r\leq s+1$ is treated analogously,
and induction implies that all the coefficients~$a_{i}$,~$b_{j}$
of~\pref{lem:AnyPse} are constant.
\end{proof}
\begin{thm}\label{thm:two}
Given a differential equation~\pref{eq:eq4} the conditions of integrability, i.e.~the conditions of existence of local formal recursion operators~$\L_{1}$,~$\M_{0}$, have the form of conservation laws
\[
D_{t}\rho_{i}=D_{x}\sigma_{i}.
\]
\end{thm}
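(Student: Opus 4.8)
The obstructions to be characterised are exactly the solvability conditions of the triangular system~\pref{eq:eqsml}: at each step one integrates an equation $D_{x}(u_{4}^{-i/4}l_{i})=A_{i}$ (respectively $D_{x}(u_{4}^{-j/4}m_{j})=B_{j}$), so the only integrability condition produced there is that the right hand side $A_{i}$ (resp. $B_{j}$), a fixed differential function built from $u_{k}$, $v_{k}$ and the previously computed coefficients, lie in the image of $D_{x}$. The plan is to show that each condition $A_{i}\in\operatorname{Im}D_{x}$ is equivalent to a single conservation law $D_{t}\rho_{i}=D_{x}\sigma_{i}$, with $\rho_{i}$ a local density and $\sigma_{i}$ the associated local flux, the latter being precisely the quantity fed back into the recursion, as in~\pref{eq:cl0} and the conservation laws displayed afterwards.

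My main tool would be the residue calculus in the matrix formulation~\pref{eq:mfro}. For scalar pseudodifferential operators the residue $\operatorname{res}(P)$ (the coefficient of $D^{-1}$) satisfies the classical identity $\operatorname{res}[P,Q]\in\operatorname{Im}D_{x}$; writing a matrix operator entrywise gives $\operatorname{res}\operatorname{tr}[\mathbf{A},\mathbf{B}]=\sum_{a,b}\operatorname{res}[A_{ab},B_{ba}]\in\operatorname{Im}D_{x}$, so the same holds with $\operatorname{res}\operatorname{tr}$ for the $2\times 2$ matrix fro's. Since $\operatorname{res}$, $\operatorname{tr}$ and $D_{t}$ all act on coefficients, and since~\pref{eq:mfro} together with the Leibniz rule gives $(\Rb^{k})_{t}=[\F,\Rb^{k}]$, a genuine fro would yield, for every integer $k$, the identity $D_{t}\operatorname{res}\operatorname{tr}(\Rb^{k})=\operatorname{res}\operatorname{tr}[\F,\Rb^{k}]=D_{x}\sigma_{k}$. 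This identifies the candidate densities $\rho_{k}\coloneqq\operatorname{res}\operatorname{tr}(\Rb^{k})$, built from powers of the matrix forms of the generators $\L_{1}$ and $\M_{0}$ of~\pref{thm:one}, and shows that their conservation laws are forced by the operator equation.

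To see that these conservation laws are precisely the obstructions, I would run the argument order by order. Suppose~\pref{eq:eqsml} has been solved down to the level just before a given step, and set $\mathbf{E}\coloneqq\Rb_{t}-[\F,\Rb]$ for the partially constructed $\Rb$; then $\mathbf{E}$ vanishes in all orders already treated and its top coefficient reproduces the current right hand side $A_{i}$ (resp. $B_{j}$). Expanding as in the previous paragraph, $D_{t}\operatorname{res}\operatorname{tr}(\Rb^{k})=D_{x}\sigma_{k}+\operatorname{res}\operatorname{tr}\big(\sum_{a}\Rb^{a}\mathbf{E}\,\Rb^{k-1-a}\big)$, and choosing the power $k$ so that the explicit leading symbols~\pref{eq:Li}--\pref{eq:Mj} of $\Rb$ carry the top coefficient of $\mathbf{E}$ into the $D^{-1}$ slot, the residue on the right isolates $A_{i}$ modulo $\operatorname{Im}D_{x}$ up to a nonzero constant. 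Since $\operatorname{res}$ annihilates total $x$-derivatives, this is exactly the obstruction; hence the obstruction equals $D_{t}\rho_{k}-D_{x}\sigma_{k}$ up to a nonzero factor and vanishes if and only if $D_{t}\rho_{k}=D_{x}\sigma_{k}$. Homogeneity~\pref{lem:one} and the $k$-independence of the obstructions noted after~\pref{eq:eqsml} then identify the $\rho_{k}$ as the universal canonical densities.

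The hard part will be the bookkeeping in this last step: one must check that $\operatorname{res}\operatorname{tr}\big(\sum_{a}\Rb^{a}\mathbf{E}\,\Rb^{k-1-a}\big)$ genuinely isolates $A_{i}$ modulo $\operatorname{Im}D_{x}$ with a nonzero coefficient, keeping control of the contributions of the still undetermined lower coefficients of $\mathbf{E}$, and that every obstruction of the recursion arises in this way --- equivalently, that the part of $A_{i}$ free of $D_{t}$ is automatically a total $x$-derivative, so that no obstruction of a shape other than a conservation law can occur. One also has to confirm locality: each $\rho_{k}$ is local as a residue of a product of local operators, and $\sigma_{k}$ is local because $D_{t}\rho_{k}$ has been shown to lie in $\operatorname{Im}D_{x}$. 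Some care is required here, since --- as the density $v_{1}/\sqrt{u_{4}}-2\sigma_{0}/u_{4}^{1/4}$ appearing above shows --- a $\rho_{k}$ may involve fluxes $\sigma_{j}$ generated by earlier conservation laws; these are available precisely because the corresponding earlier conditions are assumed to hold by the time step $k$ is reached.
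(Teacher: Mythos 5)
Your route is genuinely different from the paper's, and its first half is sound: the trace--residue identity for matrix pseudodifferential operators, together with $(\Rb^{k})_t=[\F,\Rb^{k}]$ from~\pref{eq:mfro} and the Leibniz rule, does show that $\operatorname{res}\operatorname{tr}(\Rb^{k})$ is a conserved density for every exact fro $\Rb$; and cyclicity of $\operatorname{res}\operatorname{tr}$ modulo $\operatorname{Im}D_{x}$ reduces your error term to $k\,\operatorname{res}\operatorname{tr}(\mathbf{E}\,\Rb^{k-1})$, whose coefficient of $D^{-1}$ is, for the right choice of $k$, the current right-hand side of~\pref{eq:eqsml} times a power of $u_{4}$ (the undetermined lower coefficients of $\mathbf{E}$ then only enter below the residue slot). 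The genuine gap is in the completeness claim, and it is visible already by weight counting~\pref{eq:wlm}: since $\weight{\operatorname{res}\operatorname{tr}(\L_{1}^{k})}=\tfrac{k}{2}w_{t}+w_{x}$, the obstruction of weight $w_{x}$ --- the density $\rho_{1}=u_{3}/u_{4}$ of Appendix~\pref{app:tableint} --- could only arise at $k=0$, where $\L_{1}^{0}=\mathcal{I}$ has zero residue. This is the analogue of the logarithmic residue in the evolutionary symmetry approach; your scheme as displayed cannot produce it, and repairing it requires either a $\operatorname{res}\operatorname{tr}\log$ device or a separate analysis of $\M_{-2}=\M_{0}\L_{1}^{-2}$ (for which the leading residue contributions cancel, so nontriviality is not automatic). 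Until that density, and more generally the claim that \emph{every} $A_{i}$, $B_{j}$ of~\pref{eq:eqsml} is reached by some residue, is established, the argument does not prove the theorem.

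For contrast, the paper avoids residues altogether: by \pref{lem:AnyPse} one expands the operators $D_{x}$ and $D_{t}$ themselves in the basis $\L_{1}^{i}$, $\M_{0}\L_{1}^{l}$ with differential-function coefficients $\rho_{i},\pi_{j}$ and $\sigma_{i},\theta_{j}$, and the single identity $[D_{x},D_{t}]=0$, read off order by order, yields all the conservation laws at once --- including the weight-$w_{x}$ one, which appears simply as the coefficient of $\mathcal{I}$ in the expansion of $D_{x}$. Your residue formulas remain useful as an explicit way of computing the densities, but the existence and completeness statement is obtained far more cheaply from the expansion argument.
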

\begin{proof}
\pref{lem:AnyPse} implies that, in particular, both~$D_{x}$
and~$D_{t}$ can be written as
\begin{gather*}
D_{x}=\rho_{1}\mathcal{L}_{1}+\rho_{0}+\rho_{-1}\mathcal{L}_{1}^{-1}+\cdots+\pi_{-2}\mathcal{M}_{0}\mathcal{L}_{1}^{-2}+\pi_{-3}\mathcal{M}_{0}\mathcal{L}_{1}^{-3}+\cdots,\\
D_{t}=\sigma_{1}\mathcal{L}_{1}+\sigma_{0}+\sigma_{-1}\mathcal{L}_{1}^{-1}+\cdots+\theta_{0}\mathcal{M}_{0}+\theta_{-1}\mathcal{M}_{0}\mathcal{L}_{1}^{-1}+\cdots
\end{gather*}
Commutativity of~$D_{x}$ and~$D_{t}$ implies the result.
\end{proof}
We give in Appendix~\pref{app:tableint}, for further reference, a table of the first integrability conditions for equations in the class~\pref{eq:eq4}.

The homogeneity scheme can be used to study the symmetry structure of any integrable equation of the type~\pref{eq:eq}. There exist two possible structures: i) an algebra of fro's generated by two independent operators, when~$m$ is even; ii) an algebra of fro's generated by one operator, when~$m$ is odd. Cf.~the conclusions section~\ref{sec:concl} for more details.

\section{Classification of integrable Lagrangian systems}

As an application of the previous theory, we will perform a classification that has interest on its own: finding integrable Lagrangian systems with Lagrangians
\begin{equation}
\mathscr{L}=\frac12 L_2(q_{xx}, q_x, q)\,q_t^2 + L_1(q_{xx}, q_x, q)\, q_{t} + L_0(q_{xx}, q_x, q).
\end{equation}
The corresponding Euler-Lagrange equations 
\[\frac{\delta \mathscr{L}}{\delta q}=-D_t\left(\pd{\mathscr{L}}{q_{t}}\right)+D_x^2\left(\pd{\mathscr{L}}{q_{xx}}\right)-D_x\left(\pd{\mathscr{L}}{q_{x}}\right)+\pd{\mathscr{L}}{q}
\]
are of the form~\pref{eq:eq4}, linear in~$q_{xxxx}$. A non-degeneracy condition is
\begin{equation}\label{eq:cu4}
\pd{f}{q_{xxxx}}=
\frac{1}{L_2}\left(\frac12\pd{^2L_2}{q_{xx}^2}q_t^2+
    \pd{^2L_1}{q_{xx}^2}q_t+\pd{^2L_0}{q_{xx}^2}\right)\neq0
\end{equation}
so~$L_2\neq0$ and $L_{2}$, $L_1$ and~$L_{0}$ cannot be
simultaneously linear in~$q_{xx}$:
\begin{equation}
\label{eq:coeffcond}
\left|\pd{^2L_2}{q_{xx}^2}\right|+\left|\pd{^2L_1}{q_{xx}^2}\right|+\left|\pd{^2L_0}{q_{xx}^2}\right|\neq0\quad\text{(identically).}
\end{equation}
The classification process is too long to be described in full detail here, so we just explain the general procedure and remark especial, crucial points that arose in the computation.

We have to solve the infinite set of partial differential equations derived from the integrability conditions
\begin{equation}\label{eq:intc}
c_i=D_t\rho_i-D_x\sigma_i=0,\qquad i=0,\ldots.
\end{equation}
The first condition is~$c_0=D_t\rho_0-D_x\sigma_0=0$ where, as listed in Appendix~\pref{app:tableint},~$\rho_0$ is the first canonical density
\[
\rho_0
=\left(\frac{1}{L_2}\pd{^2\mathscr{L}}{q_{xx}^2}\right)^{-1/4}
=L_2^{1/4}\left(\frac12\pd{^2L_2}{q_{xx}^2}q_t^2+\pd{^2L_1}{q_{xx}^2}q_t+\pd{^2L_0}{q_{xx}^2}\right)^{-1/4}.
\]
The highest order derivatives in~$D_t\rho_0$ are~$q_{xxxx}$ and~$q_{xxt}$, implying that~$\sigma_0$ is a function with dependencies~$\sigma_0(q_{xxx},q_{xt},q_{xx},q_t,q_x,u)$. The coefficients of~$c_0$ in~$q_{xxxx}$ and~$q_{xxt}$ simply determine the dependency of~$\sigma_0$ in higher derivatives. The first genuine obstruction is obtained from the coefficient of~$c_0$ in~$q_{xxx}^2q_t^3$:
\[
\left(L_2\pd{^3L_2}{q_{xx}^3}+3\pd{^2L_2}{q_{xx}^2}\pd{L_2}{q_{xx}}
\right)\pd{^2L_2}{q_{xx}^2}=0
\]
whose general solution is
\[
L_{2}(q_{xx},q_x,q)=\sqrt{L_{22}(q_x,q)q_{xx}^2 + 
    L_{21}(q_x,q)q_{xx} +L_{20}(q_x,q)}.
\]
Exploring further condition~$c_0$ forces the discriminant~$
L_{21}^2-4L_{22}L_{20}$ to be zero, so
\[
L_{2}(q_{xx},q_x,q)= L_{21}(q_x,q)q_{xx} +L_{20}(q_x,q).
\]
An important step to simplify the computations is to realise that~$L_{21}(q_x,q)$ can be taken as zero. Supposing~$L_{21}(q_x,q)\neq0$, and using conditions~$c_0$ and~$c_1$ one can prove that the term of the differential equation in~$q_{xxxx}$ is of the form
\[
\frac{1}{L_2}\pd{^2\mathscr{L}}{q_{xx}^2}q_{xxxx}=\frac{q_{xxxx}}{[Dh(q_x,q)+k]^4},\qquad\text{$k$~constant.}
\]
A contact transformation with~$\overline{x}=\xi=h(q_x,q)+k\,x$,~$\overline{q}=\psi(q_x,q)$ and
\[
\left(\pd{h}{q}q_x+k\right)\,\pd{\psi}{q_x}=\pd{h}{q_x}q_x\,\pd{\psi}{q}
\]
simplifies this term to~$f(q_x,q)q_{xxxx}$, that is to say, we can take~$L_{21}(q_x,q)=0$. Notice that we have used a transformation that depends explicitly on~$x$, but that, if the original equation does not depend explicitly on~$x$, neither will the transformed equation. This first transformation that we use to classify the integrable classes in~\pref{eq:lagrangian} exhausts the possibility of further using a proper contact transformation as equivalence relation. But we will be free to use~\emph{point} transformations, that we call~\emph{allowed transformations} to choose representatives in the integrable classes.

We just deduced that we can restrict our study to Lagrangians
\[\mathscr{L}=\frac12L_{20}(q_x,q)q_t^2+L_1(q_{xx},q_x,q)q_t+L_0(q_{xx},q_x,q).
\]
The condition~$c_0$ still gives restrictions, e.g.~implying that~$L_1$ is linear:
\[L_1(q_{xx},q_x,q)=L_{11}(q_x,q)q_{xx}+L_{10}(q_x,q).
\]
Without loss of generality~$L_{11}(q_x,q)$ can be taken as~$0$ by equivalence of Lagrangians. Combining~$c_0$ and~$c_1$ one finds that~$L_0$ is quadratic in~$q_{xx}$:
$$L_0(q_{xx},q_x,q)=L_{02}(q_x,q)q_{xx}^2+L_{01}(q_x,q)q_{xx}+L_{00}(q_x,q).$$
and $L_{01}(q_x,q)$ can be taken too as~$0$ by equivalence. The Lagrangian at this stage is
\[\mathscr{L}=\frac12L_{20}(q_x,q)q_t^2+ L_{10}(q_x,q)q_t+L_{02}(q_x,q)q_{xx}^2+L_{00}(q_x,q).
\]
with~$L_{02}(q_x,q)\neq0$ because of~\pref{eq:coeffcond}. The condition~$\partial c_0/{\partial q_{xx}}=0$ implies the following differential relation between~$L_{02}(q_x,q)$  and~$L_{20}(q_x,q)$:
\[
\left(\pd{}{q_x}+\frac14 L_{20}^{-1}\pd{L_{20}}{q_x}\right)^2z=0
\]
with~$z=L_{02}^{-1/4}$.
Its general solution is
\begin{equation}\label{eqL22}
L_{02}(q_x,q)=\frac{L_{20}(q_x,q)}{[\lambda(q)q_x+\mu(q)]^4}
\end{equation}
where~$\lambda(q)$ and~$\mu(q)$ are arbitrary functions not simultaneously zero. Now~$c_0$ is solved setting~$\mu(q)=\mu$ constant, while~$c_1$ implies that
\begin{equation}\label{eq::eqL20}
\pd{^3}{q_x^3}\left(\frac{\lambda(q)q_x+\mu}{L_{20}}\right)=0.
\end{equation}
In general
\begin{equation}\label{eq:L20a}
L_{20}(q_x,q)=\frac{\lambda(q)q_x+\mu}{L_{202}(q)q_x^2+ 
L_{201}(q)q_x+ L_{200}(q)}.
\end{equation}
To simplify computations, we considered two main cases according to the value~$\mu=0$ or~$\mu\neq0$. The classification proceeds by branching inside these two main cases depending on the value of the resultant between the polynomials on~$q_x$ in the denominator and numerator~of~\pref{eq:L20a}. It is important to simplify the intermediate families of equations by using allowed transformations.  We produce the final classification ordering the cases according to the form of~$L_{20}(q_x,q)$, which transforms  under point transformations~$\overline{t}=\chi(t)$, $\overline{x}=\xi(x,q)$, $\overline{q}=\psi(x,q)$ as
\begin{equation}\label{eq::pointtr}
\overline{L}_{20}=\frac{\chi'\left(\xi_{q}q_x+\xi_{x}\right)}{\left(\xi_{x}\psi_{q}-\xi_{q}\psi_{x}\right)^{2}}L_{20}.
\end{equation}
We give below a list of Lagrangians that satisfy the first ten integrability conditions~$c_i$, $i=0,\ldots 10$. Each of these Lagrangians is a representative of a class of equations equivalent under contact transformations and the addition of total derivatives. In the next section, we will study the complete integrability of the given classes. 
\begin{gather}
\label{eq:lagrD1f}\tag{L1}
\mathscr{L}=\frac{q_t^2}{2}
   +\epsilon\, q_xq_t
   +\frac{q_{xx}^2}{2} + \delta_2\frac{q_x^2}{2} + \delta_1\frac{q^2}{2}+\delta_0 q,
\\
\label{eq:lagrD2f}\tag{L2}
\mathscr{L}=\frac{q_t^2}{2}
   -q_x^2q_t
   +\frac{q_{xx}^2}{2}+\frac{q_x^4}{2},
\\
\label{eq:lagrD8f}\tag{L3}
\mathscr{L}=\frac{q_t^2}{2}
+\frac{q_{xx}^2}2
+\frac{q_x^3}2,
\\
\label{eq:lagrD3f}\tag{L4}
\mathscr{L}=\frac{q_t^2}{2}+a(q)\,q_xq_t+\frac{q_{xx}^2}{2q_x^4}+a'(q)q_x\log{q_x}+\frac{a^2(q)}{2}q_x^2+d(q),\\
\label{eq:lagrD5f}\tag{L5}
\mathscr{L}=\frac{q_t^2}{2}+\left(\frac{\gamma}{q_x}+\epsilon\, q_x\right)q_t+\frac{q_{xx}^2}{2q_x^4}+\frac{\epsilon^2}{2}q_x^2+\frac{\gamma^2}{2q_x^2}+\frac{\delta}{q_x},\quad|\gamma|+|\delta|\neq0,
\\
\label{eq:lagrD6f}\tag{L6}
\mathscr{L}=\frac{q_x}2q_t^2+(\epsilon q_x+\beta)q_xq_t+\frac{q_{xx}^2}{2q_x^3}+\frac{\epsilon^2}2q_x^3+\epsilon\beta\, q_x^2+\frac{\delta}{q_x},
\\
\label{eq:lagrD7f}\tag{L7}
\mathscr{L}=\frac{q_t^2}{2q_x}+\frac{b(q)}{q_x}\,q_t+\frac{q_{xx}^2}{2a(q)^4q_x^5}
+\frac{d_2(q)}{q_x},
\\
\label{eq:lagrD10f}\tag{L8}
\mathscr{L}=\frac{q_t^2}{2(q_x^2-1)} + \frac{q_{xx}^2}{2(q_x^2-1)} + d(q)q_x^2- \frac{d(q)}{3},\quad 
d'''(q)-8d(q)d'(q)=0.
\end{gather}
Notice that~\pref{eq:lagrD1f} is the linear system,~\pref{eq:lagrD2f} is equivalent to~\pref{eq:eqNLS},~\pref{eq:lagrD8f} is equivalent to~\pref{eq:eqBouss} and~\pref{eq:lagrD10f} is an alternative way of writing~\pref{eq:eqLL}.

\section{Recursion operators of integrable Lagrangian systems}
\label{sec:recops}

In this Section all the classes given in~\pref{eq:lagrD1f}--\pref{eq:lagrD10f} will be proven to be integrable except for classes~\pref{eq:lagrD3f} and~\pref{eq:lagrD7f}, that presented computational problems that are beyond the capability of our  computing facilities. For these two classes, nevertheless, we give examples that probably exhaust their integrable cases. From now on, when we say that a Lagrangian or a system admits a symmetry or recursion operator, we are of course referring to the corresponding Euler-Lagrange equations.

The proof of integrability is achieved by giving an explicit, non-formal recursion operator and corresponding seed symmetries. The recursion operator can be found by inspection in a few cases, but  usually this is quite difficult to do. The general approach that worked here is to use Sokolov's ansatz~\cite{Sok,DS} i.e.~to search for a recursion operator of the form
\[\R=\mathcal{D}+\sum_ks_kD^{-1}\cdot\mathcal{C}_k
\]
being~$\mathcal{D}$ a differential operator and the sum a linear combination where the coefficients~$s_k$ are symmetries and~$\mathcal{C}_k$ are linearisations of conserved densities~$\rho_k$.

\subsection{Lagr.~\pref{eq:lagrD1f}}
This is the linear system, and it admits as recursion operator any linear combination with constant coefficients of the operators~$D_x$ and $D_t$. The seed symmetry is~$q$, and every derivative of the field is a symmetry.

\subsection{Lagr.~\pref{eq:lagrD2f}}
This system is related to the NLS equation~\pref{eq:eqNLS} and simple inspection shows that~$\mathcal{M}_{-1}$ is a recursion operator:
\[\R=\mathcal{M}_{-1}=-q_x+2D^{-1}\cdot q_{xx} + D^{-1}D_t.
\]
This operator applied to the symmetry~$1$ produces~$q_x$, and applied again to the latter produces~$q_t$. Iterated,~$\R$  generates a basis of the whole symmetry hierarchy. The first higher symmetries in this case are
\[
\begin{array}{l}
    s^{\rm a}_1\coloneqq q_{xxx}+3q_xq_t-2q_x^3,\\
    s^{\rm b}_1\coloneqq q_{xxt}+2q_xq_{xxx}-\frac12q_{xx}^2+\frac32q_t^2+3q_x^2q_t-\frac72q_x^4.
\end{array}
\]
System~\pref{eq:lagrD2f} admits a symmetry algebra with a definite structure. We will refer to this symmetry structure as a \emph{standard} symmetry structure for equations~\pref{eq:eq}. The standard structure consists of an infinite sequence of symmetries of two types. One type is composed of symmetries with highest order term~$c_iq_i$, where~$i$ is an odd positive integer, and the other type with higher order term~$d_jq_{jt}$, where~$j$ is an even positive integer. We will use the followig notation to denote symmetries with different types of higher order terms:
\[
\begin{array}{lll}
\text{symmetries of type a:}& s^{\textrm{a}}_{i}=c_iq_{2i+1}+\cdots,&\ i=0,1,\ldots\\
\text{symmetries of type b:}& s^{\textrm{b}}_{j}=d_jq_{2jt}+\cdots,&\ j=0,1,\ldots
\end{array}
\]
where the ellipsis indicates lower order terms. The recursion operator acts over symmetries as~$\R(s^a_i)=s^b_i$ and~$\R(s^b_j)=s^a_{j+1}$, $i,j=0,1,\ldots$. Below we will show some systems with symmetry algebras being a variation to this general structure.

\subsection{Lagr.~\pref{eq:lagrD8f}}
This system is related to the potential Boussinesq equation~\pref{eq:eqBouss} and it admits the recursion operator~$\mathcal{M}_{1}$:
\[\R=\mathcal{M}_{1}=-\frac98q_t+\frac38D^{-1}\cdot q_{xt}+\left[D-\frac38q_xD^{-1}-\frac38D^{-1}\cdot q_x\right]D_t.
\]
Applied to~$-4/3$ it yields~$q_t$, applied to~$q_x$ it gives~$s^{\rm b}_1=q_{xxt}-3q_xq_t/2$, applied to this symmetry yields a symmetry~$s^{\rm a}_3=q_7+\cdots$, applied to~$q_t$, a symmetry~$s^{\rm a}_2=q_{5}+\cdots$, applied to the latter gives~$s^{\rm b}_3=q_{6t}+\cdots$ and so on. In general,~$\R(s^{\textrm{a}}_{i})=s^{\textrm{b}}_{i+1}$ and~$\R(s^{\textrm{b}}_{j})=s^{\textrm{a}}_{j+2}$, $i,j=0,1,\ldots$. Iterated,~$\R$  generates a basis of the whole symmetry hierarchy that differs from the standard structure explained for Lagr.~\pref{eq:lagrD2f} because it lacks all symmetries of the form~$s^{\textrm{a}}_{1+3i}$ and~$s^{\textrm{b}}_{2+3j}$ with~$i,j=0,1,\ldots$.
This ``defect'' is probably due to the fact that the potential Boussinesq system~\pref{eq:lagrD8f} is related to the usually called Boussinesq equation through a potentiation, a nonlocal transformation. The Boussinesq equation has more symmetries than the potential Boussinesq equation, and  some symmetries of the former could become nonlocal for the latter. But we are dealing here only with equations admitting local symmetries~\pref{eq:sym} and local recursion operators~\pref{eq:psLM}. More about this in the conclusions of Section~\ref{sec:concl}.

\subsection{Lagr.~\pref{eq:lagrD3f}} This case comprises a big family of systems depending on two arbitrary functions~$a(q)$, $b(q)$.  Surprisingly, this family satisfies many integrability conditions, and with our present computing resources (and algorithms) we cannot restrict the class further. The generic case, with~$a'(q)\neq0$,~\emph{does not admit higher symmetries} of differential order up to five, so it seems nonintegrable. Most probably, the obstruction to integrability~$c_i$ that restricts the family is of very high level~$i$~\cite{NovPC}. In order to illustrate this case we suppose that~$a(q)=\alpha$ is constant and show how in this case the system is indeed integrable.

With~$a(q)=\text{constant}=\alpha$, system~\pref{eq:lagrD3f} posseses the conserved densities~$q_{t}q_{x}+\alpha q_{x}^{2}$ and~$\frac{q_{xx}^{2}}{2q_{x}^{4}}-\frac{q_{t}^{2}}{2}+\frac{\alpha^{2}q_{x}^{2}}{2}+d(q)$ with associated operators
\[
\mathcal{C}_{1}=\left(2\alpha q_{x}+q_{t}\right)D+q_{x}D_{t},\quad\mathcal{C}_{2}=\frac{q_{xx}}{q_{x}^{4}}D^{2}+\left(\alpha^{2}q_{x}-2\frac{q_{xx}^{2}}{q_{x}^{5}}\right)D+d'(q)-q_{t}D_{t}
\]
and a recursion operator
\begin{multline*}
\R_1=\mathcal{L}_{2}=\frac{1}{q_{x}^{2}}D^{2}-\frac{5}{2}\frac{q_{xx}}{q_{x}^{3}}D-\frac{q_{xxx}}{q_{x}^{3}}+\frac{9}{4}\frac{q_{xx}^{2}}{q_{x}^{4}}+\frac{3}{4}q_{t}^{2}+\frac{3}{2}\alpha q_{x}q_{t}+\frac{3}{4}\alpha^{2}q_{x}^{2}-\frac{1}{2}d(q)
\\
-\left(\alpha q_{x}+\frac12q_{t}\right)D^{-1}\mathcal{C}{}_{1}+\frac{1}{2}q_{x}D^{-1}\mathcal{C}_{2}.
\end{multline*}
For a generic~$d(q)$ this equation has only higher symmetries of type~$s^{\rm b}_j$, $j\geq1$ generated by~$\R_1$ with a seed symmetry~$q_t$, i.e.~$\R_1(s^b_j)=s^b_{j+1}$, $j\geq0$. There are interesting subcases admitting more symmetries, depending on the form of~$d(q)$.

\paragraph{$\bullet$ The subcase with~$d''(q)=3d^2(q)/2$} This system additionally admits all generalised symmetries of the form~$s^a_i$, and more conservation laws. The recursion operator~$\R_1=\L_2$ is able to generate a standard symmetry structure starting with seed symmetry~$s^b_0=q_t$, and a second seed symmetry
\[s^a_1=\frac{q_{xxx}}{q_{x}^3}-\frac{9 q_{xx}^2}{4 q_{x}^4}-\frac{3 q_{t}^2}{4}-\frac{3}{2} \alpha  q_{x}q_{t}-\frac{3}{4} \alpha ^2 q_{x}^2+\frac43 d(q)
\]
with~$\R_1(s^a_i)=s^a_{i+1}$, $i\geq1$. There exists an additional independent recursion operator~$\R_2=\M_1$ that can be written in finite form as
\begin{multline*}
\R_2=\mathcal{M}_{1}=-\frac{1}{2}\frac{3q_{t}+\alpha q_{x}}{q_{x}^{2}}D^{2}-\left(\frac{3}{2}\frac{q_{xt}}{q_{x}^{2}}-\frac{5}{4}\frac{3q_{t}+\alpha q_{x}}{q_{x}^{3}}q_{xx}\right)D-s_{1}^{b}
\\
{}-\frac{1}{8}\left(2q_{t}+5\alpha q_{x}\right)d(q)+\left[\frac{1}{q_{x}}D-\frac{q_{xx}}{q_{x}^{2}}\right]D_{t}-\frac{1}{2}s_{1}^{a}D^{-1}\mathcal{C}_{1}+\frac{1}{2}q_{x}D^{-1}\mathcal{C}_{3}
\end{multline*}
where
\begin{multline*}
\mathcal{C}_3=\left(\frac{q_{xt}}{q_{x}^{3}}-\frac{3q_{t}-\alpha q_{x}}{2q_{x}^{4}}q_{xx}\right)\,D^{2}
\\
{}+\left(-\frac{3q_{xx}q_{xt}}{q_{x}^{4}}+3\frac{4q_{t}-\alpha q_{x}}{4q_{x}^{5}}q_{xx}^{2}+\frac{3\alpha q_{t}^{2}}{4}+\frac{3\alpha^{2}}{2}q_{x}q_{t}+\frac{3\alpha^{3}}{4}q_{x}^{2}\right)D-\frac{3}{4}q_{t}d(q)_{q}
\\
{}	+\left[\frac{q_{xx}}{q_{x}^{3}}D-\frac{3q_{xx}^{2}}{4q_{x}^{4}}+\frac{3q_{t}^{2}}{4}+\frac{3}{2}\alpha q_{t}q_{x}+\frac{3}{4}\alpha^{2}q_{x}^{2}-\frac{3d(q)}{4}\right]D_{t}
\end{multline*}
is the linearisation of the additional conserved density
\[\rho_3=\frac{q_{xx} q_{xt}}{q_{x}^3}-\frac{3 q_{t}-\alpha  q_x}{4 q_{x}^4}q_{xx}^2+\frac{q_{t}^3}{4}+\frac{3}{4} \alpha  q_{t}^2 q_{x}+\frac{3}{4} \alpha ^2 q_{t} q_{x}^2-\frac{3}{4} d(q) q_{t}+\frac{1}{4} \alpha ^3 q_{x}^3.
\]
The recursion operator~$\R_2$ acts on symmetries as~$\R_2(s^a_i)=s^b_{i+1}$ and~$\R_2(s^b_j)=s^a_{j+2}$, $i\geq1$, $j\geq0$ and~\pref{lem:one} implies that~$\M_1^2=\L_2^3$.

\paragraph{$\bullet$ The subcase with~$d'(q)=0$.} This is a subcase of~\pref{eq:lagrD5f}, and it admits a recursion operator $\mathcal{M}_{-1}$
\[\R=\M_{-1}=-\frac{\alpha}{2}q_{x}-q_t+D^{-1}\cdot\left(\alpha q_{x}+\frac{1}{2}q_t\right) D
+\left[\frac{1}{2}q_{x}D^{-1}+D^{-1}\cdot\frac{1}{2}q_{x}\right]D_{t}.
\]
that is a square root of~$\mathcal{L}_{2}$ and generates a standard symmetry structure from~$q_t$.

\subsection{Lagr.~\pref{eq:lagrD5f}} This system admits the conserved densities~$-\epsilon q_{x}^{2}-q_{x}q_t$, $\frac{\gamma^{2}}{q_{x}}+\gamma q_t$
with associated operators
\[\mathcal{C}_{1}=\left(-2	\epsilon q_x-q_t\right)D - q_xD_t,\qquad \mathcal{C}_{2}=-\frac{\gamma^2}{q_x^2}D+\gamma D_t.
\]
A recursion operator is~$\mathcal{M}_{-1}$
\[\R=\M_{-1}=-\frac{\epsilon}{2}q_{x}-q_t-\frac12D^{-1}\mathcal{C}_{1}+\frac1{2\gamma}q_xD^{-1}\mathcal{C}_{2}.
\]
Applied to~$q_t$ this recursion operator generates the standard family of symmetries.

\subsection{Lagr.~\pref{eq:lagrD6f}} There is a recursion operator~$\mathcal{M}_{-1}$:
\[\R=\mathcal{M}_{-1}=\epsilon q_{x}-2\epsilon D^{-1}\cdot q_{xx}-D^{-1}\cdot q_{xt}
+D^{-1}\cdot q_{x}D_{t}.
\]
Applied to~$q_t$ this recursion operator generates a standard family of symmetries.

\subsection{Lagr.~\pref{eq:lagrD7f}.} The equation with generic functions~$a(q)$, $b(q)$, $d_2(q)$ and~$d_3(q)$ admits a, notably, \emph{differential} recursion operator
\begin{equation}\label{eq:rom1}
\R_1=\mathcal{M}_{0}=-D\cdot\frac{q_t}{q_{x}}+D_{t}
\end{equation}
Applied to point symmetry~$q_x$ the recursion operator yields~$0$, i.e.~$\M_0(q_x)=0$. Applied to point symmetry~$q_t$ it yields a higher symmetry
\begin{multline*}
s_2^4\coloneqq \frac{q_{xxxx}}{a^{4}q_{x}^{4}}-\frac{10q_{xx}q_{xxx}}{a^{4}q_{x}^{5}}-\frac{8a'q_{xxx}}{a^{5}q_{x}^{3}}+\frac{2bq_{xt}}{q_{x}}+\frac{15q_{xx}^{3}}{a^{4}q_{x}^{6}}+\frac{24a'q_{xx}^{2}}{a^{5}q_{x}^{4}}
\\
{}-\frac{2bq_{t}q_{xx}}{q_{x}^{2}}+b'q_{t}-\frac{2\left(2aa''+a^{6}d_{2}-10(a')^{2}\right)q_{xx}}{a^{6}q_{x}^{2}}+2d'_{2}+d_{3}'q_{x}.
\end{multline*}
This symmetry does not belong to one of the categorised symmetries, allowing the definition of a third class of symmetries
\[\text{symmetries of type c:}\quad s^{\textrm{c}}_{k}=e_kq_{2k}+\cdots,\quad k=0,1,\ldots.
\]
The recursion operator~$\M_0$ generates an infinite hierarchy of higher symmetries of the type~$s^b_{2j}$, $s^c_{2k}$ with~$j\geq1$ and~$k\geq2$.

This family of systems has subcases admitting more symmetries and recursion operators. We do not give here a full classification of subcases, but we show as an example the following system with a maximum number of independent symmetries:
\begin{equation}
\L=\frac{q_t^2}{2q_x}+\beta\frac{ q}{q_x}\,q_t+\frac{q^4}{2q_x^5}q_{xx}^2
+\delta\frac{q^2}{q_x}.
\end{equation}
Some of its symmetries are ($\Delta\coloneqq\sqrt{8\delta+9}$)
\begin{gather*}
q,\quad q_{x},\quad q_t,\quad\frac{q_{x}}{q},\quad q^{-(1+\Delta)/2}q_{x},\quad q^{-(1-\Delta)/2}q_{x},\\
\frac{q^{2}}{q_{x}^{2}}q_{xx},\quad\frac{q}{q_{x}}q_{xt}-\frac{qq_t}{q_{x}^{2}}q_{xx}
\\
\frac{q^{3}q_{xx}q_{xt}}{2q_{x}^{4}}-\frac{q^{3}q_tq_{xx}^{2}}{2q_{x}^{5}}-\frac{3q^{2}q_tq_{xx}}{4q_{x}^{3}}-\frac{\beta q_t^{2}}{2q_{x}}+\frac{3qq_t}{2q_{x}}+\frac{\beta\Delta^{2}q^{2}}{8q_{x}},
\end{gather*}
All higher symmetries can be generated with~$\M_0$ as in~\pref{eq:rom1} and~$\L_1$ i.e.
\[\R_1=\mathcal{M}_{0}=-D\cdot\frac{q_t}{q_{x}}+D_{t},\qquad\R_2=\mathcal{L}_{1}=D\cdot\frac{q}{q_{x}}.
\]

\subsection{Lagr.~\pref{eq:lagrD10f}} This system is equivalent to the Landau-Lifshitz model~\pref{eq:eqLL}, admits conserved densities
\[
\rho_{2}=\frac{q_{x}q_t}{q_{x}^{2}-1},\qquad
\rho_{3}=\frac{q_{xx}^{2}-q_t^{2}}{2(q_{x}^{2}-1)}+d(q)q_{x}^{2}-\frac{1}{3}d(q)
\]
and~$\mathcal{L}_2$ is a recursion operator
\[\R_1=\mathcal{L}_{2}=D^2-\frac{q_xq_{xx}}{q_x^{2}-1}D+\frac23d(q)+q_tD^{-1}\cdot\mathcal{C}_2-q_xD^{-1}\cdot\mathcal{C}_3
\]
where
\[\mathcal{C}_2=-\frac{q_x^2+1}{(q_{x}^{2}-1)^2}q_tD+\frac{q_x}{q_{x}^{2}-1}D_t
\]
and
\[\mathcal{C}_3=\frac{q_{xx}}{q_x^2-1}D^2+\left[\frac{q_t^2-q_{xx}^2}{(q_x^2-1)^2}q_x+2d(q)q_x\right]D+\frac13\left(3q_x^2-1\right)d'(q)-\frac{q_t}{q_x^2-1}D_t.
\]
The recursion operator~$\L_2$ generates a standard symmetry hierarchy starting from seed symmetries~$q_x$ and~$q_t$.

The subcase with~$d(q)=\delta=\text{constant}$ admits another first-order conserved density
\[\rho_1=\frac{q_t}{1-q_{x}^{2}}
\]
that allows~$\M_{-1}$ to be written in finite form as
\[\R_2=\M_{-1}=D^{-1}\mathcal{C}_{1}-q_xD^{-1}\mathcal{C}_{2}
\]
with
\[\mathcal{C}_{1}=\frac{2q_xq_t}{(1-q_{x}^{2})^2}D+\frac{1}{1-q_{x}^{2}} D_t.
\]
In this subcase, the symmetry hierarchy can be generated by repeatedly applying~$\M_{-1}$ to~$q_x$. \pref{lem:one} implies that~$\R_1=\R_2^2$.

\section{Final conclusions and discussion} \label{sec:concl}

A universal scheme for studying the symmetry algebras of non-evolutionary equations of the form~\pref{eq:eq} has been developed. The results given in section~4 for equations~\pref{eq:eq4} are easily extensible to equations~\pref{eq:eq}, as follows. An integrable equation~\pref{eq:eq} admits two types of formal recursion operators,~$\L_i$ and~$\M_j$, $i,j\in\Z$. Simple homogeneity arguments proved~\pref{thm:one}, which specifies that for integrable equations~\pref{eq:eq} with~$m=4$, $n=1$, the algebra of formal recursion operators is generated by two operators~$\L_1$, $\M_{-2}$. The operator~$\M_{-2}$ is a square root of the identity that relates operators~$\L_i$ with operators~$\M_j$, duplicating the structure found in~\cite{HSS}. In that paper it was stated that the algebra of equations with~$m=3$, $n=1$ is generated by only one operator,~$\M_{-1}$. The homogeneity weights~\pref{eq:wlm} imply that in this case
\[
\weight{\mathcal{L}_{i}}=\frac{2i}{3}w_{t},\quad\weight{\mathcal{M}_{j}}=\frac{2j+3}{3}w_{t}
\]
so~$\weight{\M_{-1}}=w_t/3$,~$\M_{-1}^{2k}=\L_1^k$, $\M_0=\M_{-1}^3$ and~$\M_{-1}$ is able to generate all powers~$\M_{0}^j\L_1^i$. For a general equation~\pref{eq:eq}
\[\weight{\mathcal{L}_{i}}=\frac{2i}{m}w_{t},\quad\weight{\mathcal{M}_{j}}=\frac{2j+m}{m}w_{t}.
\]
Thus, if~$m=2k+1$ is odd, the operator~$\M_{-k}$ generates all the algebra. If~$m=2k$ is even, the algebra is generated by two independent operators, e.g.~$\L_1$ and~$\M_{-k}$, being~$\M_{-k}$ a square root of the identity operator.
The scheme based on homogeneity considerations developed here  seems to be applicable to even more general equations of the type~$u_{t\cdots t}=f$.

The classification of Lagrangian systems as been given as an application of the theory. In families~\pref{eq:lagrD3f} and~\pref{eq:lagrD7f} integrability is only checked for special values of some arbitrary functions on the Lagrangians. We have the impression that there should be very high order obstructions to integrability refining the form of such arbitrary functions. We were unable to reach explicitly these higher orders because of their very demanding need of computational power. The significant subcases we have provided are found by requiring the existence of higher symmetries of order up to 5, and they probably exhaust the integrable cases. It would also be important to study the novelty of some of the integrable equations generated in this paper. This necessarily requires a study of possible differential substitutions or parameterisations of the equations that could relate them to simpler equations~(cf.~\cite{HSS}). But this theory is not fully developed and it merits a separate work.

Our algorithms were based on integration of differential functions, i.e.~detecting when a given differential expression is a total derivative. This allowed us to study non-polynomial equations that are usually out of the direct reach of very powerful symbolic techniques~\cite{MNW,NovWang}, directly applicable only to homogeneous polynomial systems. But our universal scheme, based on using differential polynomials in the symbols~$u_i$, $v_j$, $l_i$ and~$m_j$, produces expressions that are quasipolynomial and homogeneous. It seems feasible to develop a symbolic formalism mixing both approaches allowing to work only at an algebraic level, without derivations or integrations.

Finally, we note that the use of other representations like those in~\cite{MNW,NovWang} allow for the introduction of some nonlocalities that appear in some important integrable systems like the Boussinesq equation. In this work we have only concentrated on equations admitting local recursion operators~\pref{eq:scfro}. A further development would be to extend the theory to include nonlocalities.

\section*{Acknowledgements}
RHH would like to thank A.B.~Shabat and V.V. Sokolov for useful discussions and for suggesting the problem treated in this paper. The authors acknowledge partial support from ``Ministerio de Econom\'{i}a y Competitividad'' (MINECO, Spain) under grant MTM2016-79639-P (AEI/FEDER, EU). 

\appendix

\section{Integrability conditions}
\label{app:tableint}

We list here the first conservation laws~$D_{t}\rho_{i}=D_{x}\sigma_{i}$, $i=0,..,5$ that are integrability conditions for an equation of the form~\prettyref{eq:eq4}
\begin{gather*}
\rho_{0}=\frac{1}{u_{4}^{1/4}},\qquad\rho_{1}=\frac{u_{3}}{u_{4}},\qquad\rho_{2}=\frac{v_{1}}{\sqrt{u_{4}}}-2\frac{\sigma_{0}}{u_{4}^{1/4}}
\\
\rho_{3}=-\frac{3v_{1}D_{x}u_{4}}{2u_{4}^{5/4}}-\frac{2\sigma_{1}}{u_{4}^{1/4}}-\frac{4v_{0}}{u_{4}^{1/4}}+\frac{u_{3}v_{1}}{u_{4}^{5/4}}
\\
\rho_{4}=\frac{5\left(D_{x}u_{4}\right)^{2}}{u_{4}^{7/4}}-\frac{12u_{3}D_{x}u_{4}}{u_{4}^{7/4}}-\frac{16\sigma_{0}^{2}}{\sqrt[4]{u_{4}}}-\frac{16\sigma_{2}}{\sqrt[4]{u_{4}}}-\frac{4v_{1}^{2}}{u_{4}^{3/4}}-\frac{32u_{2}}{u_{4}^{3/4}}+\frac{12u_{3}^{2}}{u_{4}^{7/4}}
\\
\rho_{5}=\frac{5\sigma_{0}\left(D_{x}u_{4}\right)^{2}}{u_{4}^{7/4}}-\frac{12\sigma_{0}u_{3}\left(D_{x}u_{4}\right)}{u_{4}^{7/4}}-\frac{4\sigma_0v_{1}^{2}}{u_{4}^{3/4}}-\frac{16\sigma_0^{3}}{\sqrt[4]{u_{4}}}
\\
{}-\frac{16\sigma_2\sigma_0}{\sqrt[4]{u_{4}}}-\frac{32\sigma_0u_{2}}{u_{4}^{3/4}}+\frac{12\sigma_0u_{3}^{2}}{u_{4}^{7/4}}-\frac{\sigma_4}{\sqrt[4]{u_{4}}}
\end{gather*}


\begin{thebibliography}{99}
\bibitem{AMS} V.E. Adler, V.G. Marikhin and A.B. Shabat. Lagrangian chains and canonical B\"acklund transformations. Theor. Math. Phys. 129, 2 (2001) 1448--1465.
\bibitem{ASY} V.E. Adler, A.B. Shabat., R.I. Yamilov. Symmetry approach to the integrability problem. Theor. Math. Phys. 129, 2 (2001) 1448--1465.
\bibitem{QC1} G.R.W. Quispel, H.W. Capel. Equation of motion for the Heisenberg spin chain. Phys. Lett. 85A, 4, (1981) 248--250.
\bibitem{MSS} A.V. Mikhailov, A.B. Shabat, V.V. Sokolov. The symmetry approach to classification of integrable equations. In "What is Integrability?" ed.~V.E. Zakharov, (Berlin, Springer)  115--154.
\bibitem{MSY} A.V. Mikhailov, A.B. Shabat, R.I. Yamilov. The symmetry approach to the classification of non-linear equations. Complete lists of integrable equations. Russ. Math. Surv. 42:4(1987) 1--63. 
\bibitem{NovWang} V.S. Novikov and J.P. Wang. Symmetry structure of integrable nonevolutionary equations. Studies in Applied Mathematics, 119(4), (2007) 393--428.
\bibitem{MNW} A.V. Mikhailov, V.S. Novikov and J.P. Wang. On Classification of Integrable Nonevolutionary Equations. Studies in Applied Mathematics, 118(4), (2007) 419--457.
\bibitem{HSS} R. Hern\'andez Heredero, V.V. Sokolov, A.V. Shabat. A new class of linearizable equations. J. Phys. A: Math. Gen. 36 (2003) L605--L614.
\bibitem{Olver} P. J. Olver. Applications of Lie Groups to Differential Equations. Graduate Texts in Mathematics, Vol. 107, second edition, Springer-Verlag, New York, 1993.
\bibitem{Sok} V.V. Sokolov. Hamiltonian property of the Krichever–Novikov equation. Sov.Math.Dokl. 30 (1984) 44--7.
\bibitem{DS} D.K. Demskoi and V.V. Sokolov. On recursion operators for elliptic models. Nonlinearity, 21 (2008) 1253–1264.
\bibitem{NovPC} V.S.~Novikov has commented us that in some systems he studied meaningful obstructions appeared at very high levels.
\end{thebibliography}
\end{document}